\keywords{Krohn-Rhodes Theorem; Automata; Moore Machines}
\newcommand{\trunc}{{\mathrm{Trunc}}}
\newcommand{\rest}{{\mathrm{Rest}}}
\newcommand{\cw}{{\mathrm{Cw}}}
\newcommand{\remove}{{\mathrm{Remove}}}
\newcommand{\rev}{{\mathrm{Rev}}}
\DeclareMathOperator{\tuplefy}{Tuplefy}
\DeclareMathOperator{\pair}{Pair}
\DeclareMathOperator{\harv}{harv}
\DeclareMathOperator{\unpad}{Unpad}
\DeclareMathOperator{\mask}{Mask}
\newcommand*\colv[1]{\begin{pmatrix}#1\end{pmatrix}}
\newcommand{\ors}{{\mathcal{O}}}
\begin{document}

\title[Generating Regular Functions]{Generating the Functions with Regular Graphs under Composition}

\author[T.~Kern]{Thomas Kern}	%required
\address{Department of Mathematics, Cornell University, Malott Hall, Ithaca, NY, 14853, USA}	%required
\email{trk43@cornell.edu}  %optional
\thanks{This material is based upon work supported by the National Science Foundation Graduate Research Fellowship under Grants No. DMS-0852811 and DMS-1161175. This paper is adapted from the first chapter of my dissertation research under Anil Nerode at Cornell University.}	%optional

\begin{abstract}
\noindent While automata theory often concerns itself with regular predicates, relations corresponding to acceptance by a finite state automaton, in this article I study the regular functions, such relations which are also functions in the set-theoretic sense. Here I present a small (but necessarily infinite) collection of (multi-ary) functions which generate the regular functions under composition. To this end, this paper presents an interpretation of the powerset determinization construction in terms of compositions of input-to-run maps. Furthermore, known results using the Krohn-Rhodes theorem to further decompose my generating set are spelled out in detail, alongside some coding tricks for dealing with variable length words. This will include two clear proofs of the Krohn-Rhodes Theorem in modern notation.
\end{abstract}

\maketitle

\section{Introduction}

Automata theory is particularly fruitful in terms of equivalence theorems: regular expressions, deterministic and nondeterministic automata, the Myhill-Nerode theorem, the regular word logic and the weak second order theory of one successor all are equally expressive in the languages they describe. In this paper, I concern myself not with regular predicates (predicates which hold only for the words in a regular language) but with regular functions, functions whose behavior can be recognized by an automaton. This allows a translation of the Krohn-Rhodes theorem into yet another equivalent.

The Krohn-Rhodes Theorem concerns itself with \emph{finite state transducers}, an abstraction of systems that:
\begin{itemize}
\item Accept inputs from a discrete set at discrete times,
\item Retain some memory about previous inputs, which updates whenever an input is read,
\item For each input read, produce some output from a discrete set based on the input and memory.
\end{itemize}
These are an abstraction of synchronous (as opposed to those that update continuously), digital (as opposed to those that deal with analog values) systems.

Originally proved in \cite{krohn}, the Krohn-Rhodes theorem itself gives a decomposition of arbitrary finite state transducers into a cascade of transducers from a small generating set. Computational implementations of this decomposition are available \cite{nagy05}. The Krohn-Rhodes Theorem can be used to analyze the rough behavior of automata, providing applications to Artificial Intelligence \cite{nagy06}.

Finite state transducers are formalized as follows:
\begin{defi}
A \emph{Moore Machine} is a tuple: $(\Sigma, Q, q_0, \Gamma, \delta, \epsilon)$.
\begin{itemize}
\item $\Sigma$ is a finite set of input characters (alphabet).
\item $Q$ is a finite set of states.
\item $q_0 \in Q$ is the initial state.
\item $\Gamma$ is a finite set of output characters.
\item $\delta: Q \times \Sigma \to Q$ is the transition function.
\item $\epsilon: \Sigma \to \Gamma$ is the output function.
\end{itemize}

For convenience, I will sometimes denote the map $x \mapsto \delta(a,x)$ by $\delta_a$.

Given an input word $w \in \Sigma^*$, construct a run $r$ and output $o$ such that:
\begin{itemize}
\item $r[0] = q_0$.
\item $r[i+1] = \delta(r[i],w[i])$ for $0 \leq i < |w|$.
\item $o[i] = \epsilon(r[i])$, for $0 \leq i \leq |w|$.
\end{itemize}
Where the indexing notation is such that: \[w = w[0], \ldots, w[|w|-1].\]
\end{defi}

The correct way of thinking about Moore Machines is that each input acts as a transition between one state and the next, or that each state is the state between inputs. A proper representation would have input characters half a step offset from states. Outputs are simply a product of the state the automaton is in and so should be in step with the states. However, representing the sequences of input characters and states as words requires a choice of direction to shift half a step. A considerable effort has been made to pick the option to result in the cleanest presentation. In this paper, the input character $w[i]$ tells the device how to transition from state $r[i]$ to state $r[i+1]$ \footnote{This is as opposed to the input character $w[i]$ telling the device how to transition from state $r[i-1]$ to state $r[i]$. This alternative is not uncommon.}.

\begin{exa}
This example below shows how inputs, states, and outputs, respectively, line up according to my notation.

\begin{center}\includegraphics{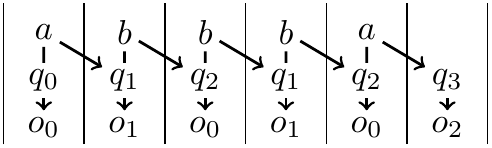}\end{center}

\end{exa}

I will typically consider Moore Machines that simply output their states:
\begin{defi}
A Moore Machine is said to be \emph{transparent} if $\Gamma = Q$ and $\epsilon$ is the identity.
In this case, Moore Machines are presented as a tuple: $(\Sigma, Q, q_0, \delta)$.
\end{defi}

Moore Machines can be interpreted as functions from input words to output words:
\begin{defi}
Given a Moore Machine $M = (\Sigma, Q, q_0, \Gamma, \delta, \epsilon)$, and word $w \in \Sigma^*$, I denote by $M(w)$ the output of $M$ on input $w$, with $M(w) \in \Gamma^*$.
\end{defi}

Moore Machines, however, represent only a small subset of those functions on words which can be reasoned about using finite automata.

\begin{defi}
Given two alphabets, $\Sigma, \Gamma$, a function on words $f: \Sigma^* \to \Gamma^*$ is said to be:
\begin{itemize}
\item \emph{length-preserving} if for any word $w$, $|f(w)| = |w|$. 
\item \emph{causal} if $f(w)[i]$ depends only on $w[0], \ldots, w[i]$.
\item \emph{strictly causal} if $f(w)[i]$ depends only on $w[0], \ldots, w[i-1]$.
\item \emph{character-wise} if $f(w)[i]$ depends only on $w[i]$.
\end{itemize}
\end{defi}

\begin{defi}
Define the following useful functions for dealing with words:
\begin{itemize}
\item Let $S_a$ denote the \emph{successor-}$a$ function which appends the character $a$ to the end of a word. 
\item Let $\trunc$ denote the function which removes the last character of a word.
\item Let $\rest$ denote the function which removes the first character of a word.
\end{itemize}
\end{defi}

\begin{prop}
Given a Moore Machine $M$, the function $M$ computes is strictly causal, and increases length by 1.
\end{prop}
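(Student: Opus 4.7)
The plan is to verify the two claims directly from the definition of the run and output sequences associated to a Moore Machine, with the length claim being essentially a bookkeeping exercise and the strict causality claim following from a short induction on the index $i$.

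First I would address the length claim. By the construction in the definition, the run $r$ is indexed from $0$ up to $|w|$ (since $r[0]$ is given as $q_0$ and $r[i+1]$ is defined for each $0 \le i < |w|$), so $r$ has length $|w|+1$. The output $o$ satisfies $o[i] = \epsilon(r[i])$ for $0 \le i \le |w|$, matching $r$ index for index, and hence $|M(w)| = |o| = |w|+1$. This establishes that $M$ increases length by exactly $1$.

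Next I would prove strict causality. The key observation is that $r[i]$ depends only on $w[0], \ldots, w[i-1]$; this is immediate by induction on $i$. For the base case, $r[0] = q_0$ depends on no input characters at all. For the inductive step, $r[i+1] = \delta(r[i], w[i])$, and by hypothesis $r[i]$ depends only on $w[0], \ldots, w[i-1]$, so $r[i+1]$ depends only on $w[0], \ldots, w[i]$. Consequently $o[i] = \epsilon(r[i])$ depends only on $w[0], \ldots, w[i-1]$, which is exactly the definition of strictly causal.

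There is no real obstacle here; the proposition is essentially unpacking the definition, and the only subtlety worth flagging to the reader is the index shift that makes the output length $|w|+1$ rather than $|w|$, which is what causes the strict (rather than merely causal) dependence.
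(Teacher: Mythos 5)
Your proof is correct and is exactly the definition-unpacking argument the paper intends; the paper itself states this proposition without proof, treating it as immediate from the construction of the run $r$ and output $o$. Your induction showing $r[i]$ depends only on $w[0],\ldots,w[i-1]$ and the index count giving $|o| = |w|+1$ are both sound.
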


It is often more convenient to deal with length preserving versions of this function:

\begin{defi}
Given a Moore Machine $M = (\Sigma, Q, q_0, \Gamma, \delta, \epsilon)$, and word $w \in \Sigma^*$, denote by:
\begin{itemize}
\item $M^\trunc(w)$ the output of $M$ on input $w$ with the last state removed.
\item $M^\rest(w)$ the output of $M$ on input $w$ with the first state (the start state) removed.
\end{itemize} 
\end{defi}

\begin{prop}
Given a Moore Machine $M$, the function $M^\trunc$ is strictly causal and length preserving, and the function $M^\rest$ is causal and length preserving.
\end{prop}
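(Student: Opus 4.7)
The plan is to unpack the definitions of $M^\trunc$ and $M^\rest$ and then read off length-preservation and the required causality from the indexing conventions already set up in the introduction.

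First I would establish the key inductive fact about runs: if $r$ is the run of $M$ on input $w$, then $r[i]$ depends only on $q_0$ and $w[0], \ldots, w[i-1]$. This follows by induction on $i$: the base case $r[0] = q_0$ uses no inputs, and the inductive step $r[i+1] = \delta(r[i], w[i])$ extends the dependency by exactly one character. Consequently the output $o[i] = \epsilon(r[i])$ depends only on $w[0], \ldots, w[i-1]$.

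Next I would count lengths. The full output $o$ has one entry for every state, so $|o| = |w|+1$, matching the earlier proposition that $M$ itself increases length by $1$. Removing one character (either the first or the last) then gives $|M^\trunc(w)| = |M^\rest(w)| = |w|$, establishing length-preservation in both cases.

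Finally I would read off the causality claims from the indexing. For $M^\trunc$, we have $M^\trunc(w)[i] = o[i] = \epsilon(r[i])$, which by the inductive fact depends only on $w[0], \ldots, w[i-1]$; this is precisely strict causality. For $M^\rest$, we have $M^\rest(w)[i] = o[i+1] = \epsilon(r[i+1])$, which depends only on $w[0], \ldots, w[i]$; this is causality. There is no real obstacle here; the only mild subtlety is keeping the half-step offset between inputs and states straight, which is exactly why the author flagged the notational convention in the discussion preceding the proposition.
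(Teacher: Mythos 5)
Your proof is correct and is exactly the direct verification the paper intends: the paper states this proposition without proof, treating it as immediate from the definitions, and your induction on the run together with the index bookkeeping for $\trunc$ versus $\rest$ supplies precisely the omitted details. No gaps.
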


The notion of finite automaton, or finite state recognizer, is more commonly studied than the finite state transducer. \emph{Finite state automata} are an abstraction of systems that:
\begin{itemize}
\item Accept inputs from a discrete set at discrete times,
\item Retain some memory about previous inputs, which updates whenever an input is read,
\item Having finished reading a sequence of inputs, either \emph{accepts} or \emph{rejects}.
\end{itemize}

Just as finite state transducers can be interpreted as functions on words, finite state automata can be interpreted as predicates on words, returning a boolean value after having read in a word. Their predicative nature means that finite state automata are more convenient to use in applications to formal logic. On the other hand, real world systems are more often interested in transforming inputs, and so are better represented by finite state transducers.

Finite automata yield a notion of a regular set or regular event, a collection of words or sequences of inputs which are exactly those which some finite automaton accepts. Once I define what it means to represent a function of words with an automaton, this will yield a notion of regular function.

Finite state recognizers are formalized as follows:

\begin{defi}
A \emph{finite state automaton} is a tuple: $(\Sigma, Q, I, \delta, F)$.
\begin{itemize}
\item $\Sigma$ is a finite set of input characters (alphabet).
\item $Q$ is a finite set of states.
\item $I \subseteq Q$ is the set of initial states.
\item $\delta: Q \times \Sigma \to Q$ is the transition relation.
\item $F \subseteq Q$ is the set of final states.
\end{itemize}
Given an input word $w \in \Sigma^*$, $r \in Q^*$ is a run on input $w$ if:
\begin{itemize}
\item $r[0] \in I$.
\item $\delta(r[i],w[i],r[i+1])$ for $0 \leq i < |w|$.
\end{itemize}
$w$ is \emph{accepted} if there is a run $r$ on input $w$ such that $r[|w|] \in F$.
\end{defi}

\begin{defi}
A finite state automaton $A = (\Sigma,Q,I,\delta,F)$ is \emph{deterministic} if:
\begin{itemize}
\item $I$ is a singleton.
\item $\delta$ is a function from $Q \times \Sigma \to Q$, that is, given a $q \in Q$, and $a \in \Sigma$, there is a unique $q' \in Q$ such that $\delta(q,a,q')$.
\end{itemize}
By default, $A$ is nondeterministic.
\end{defi}

A well known theorem of finite automata is that:
\begin{prop}\label{detlemma}
If $R$ is the set of accepted inputs of some automaton, then it is also the set of accepted inputs of some deterministic automaton.
\end{prop}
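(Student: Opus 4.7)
The plan is to carry out the standard powerset (subset) construction. Given a nondeterministic automaton $A = (\Sigma, Q, I, \delta, F)$, I will build a deterministic automaton $A' = (\Sigma, Q', q_0', \delta', F')$ whose states are subsets of $Q$, whose initial state is $I$ itself, whose transition function sends $(S, a)$ to the set of all $q' \in Q$ for which there exists $q \in S$ with $\delta(q, a, q')$, and whose final states are exactly those $S \subseteq Q$ meeting $F$. Since $Q$ is finite, $Q' = 2^Q$ is finite, and $\delta'$ is genuinely a function of $Q' \times \Sigma$, so $A'$ is a legitimate deterministic automaton.

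The key step is to prove by induction on $|w|$ that the unique run of $A'$ on input $w$ ends in the state
\[ R(w) = \{q \in Q : \text{there is a run of } A \text{ on } w \text{ ending in } q\}. \]
The base case $w = \varepsilon$ is immediate from $q_0' = I$. For the inductive step, if the run of $A'$ on $w$ ends in $R(w)$, then by definition of $\delta'$ the run on $wa$ ends in $\{q' : \exists q \in R(w),\ \delta(q,a,q')\}$, which is exactly $R(wa)$ by concatenating any witnessing run for $q$ with the transition to $q'$.

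Once this invariant is established, the rest is a direct unpacking of definitions. The word $w$ is accepted by $A'$ iff the final state of its run lies in $F'$, iff $R(w) \cap F \neq \emptyset$, iff some run of $A$ on $w$ ends in a final state, iff $w$ is accepted by $A$.

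There is no real obstacle here; the only subtle point is making sure the inductive invariant $R(w)$ is set up so that both the existence of a run of $A$ ending at a specific state and the transitions of $A'$ can be related cleanly in a single step. Since the output automaton $A'$ has $2^{|Q|}$ states, the construction is highly non-economical, but that is not relevant to the statement being proved.
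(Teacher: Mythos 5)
Your proposal is correct and is exactly the approach the paper takes: the paper states this proposition without proof as a well-known fact, but the determinization $\det(A) = (\Sigma, \mathcal{P}(Q), \{I\}, \delta', \{E \subset Q: E \cap F \neq \emptyset\})$ it defines in Section 2 is precisely your powerset construction, and your inductive invariant (the run of the determinized automaton tracks the set of reachable states of $A$) is the standard justification the paper implicitly relies on.
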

In either case, $R$ is \emph{regular}.

A common convention in logic is to identify a function $f$ with the relation $R_f$ which consists of all pairs of the form $(x,f(x))$, or, for $n$-ary functions, \[(x_0,\ldots,x_{n-1},f(x_0,\ldots,x_{n-1})),\] for $x$ in the domain of $f$. As such, a regular function can be defined as a relation which is regular and also a function. The question now is how to input multiple words, especially multiple words of different lengths, to a finite automaton \footnote{Note that asynchronous input, that is, reading in the input words one at a time separated by a distinguished character is almost completely useless in terms of the functions that can be represented.}

I introduce the $\tuplefy$ map to merge words together in parallel so they can be read by an automaton. For words of different lengths, I add a dummy character $\#$.

\begin{defi}
Define the map \[\tuplefy: \Sigma_0^* \times \cdots \times \Sigma_{n-1}^* \to ((\Sigma_0 \cup \{\#\}) \times \cdots \times (\Sigma_{n-1} \cup \{\#\}))^*,\]
Which satisfies:
\[\tuplefy(w_0,\ldots,w_{n-1})[i]_j = \begin{cases} w_j[i] & \text{ it exists} \\ \# & \text{ otherwise} \end{cases}\]
and $\left|\tuplefy(w_0,\ldots,w_{n-1})\right| = \max_i |w_i|$.
\end{defi}

\begin{exa}
Below is shown how $\tuplefy$ combines words together into one word:

\[
\begin{array}{|r||c|c|c|c|c|}
\hline
w_0 & a & b & b & a &  \\
w_1 & a & b & & &  \\
w_2 &&&&&\\
w_3 & b & b & b & & \\
w_4 & a & a & a & a & a\\
\hline
\tuplefy(w_0,w_1,w_2,w_3,w_4) &\colv{a\\a\\\#\\b\\a} & \colv{b\\b\\\#\\b\\a} & \colv{b\\\#\\\#\\b\\a} & \colv{a\\\#\\\#\\\#\\a} & \colv{\#\\\#\\\#\\\#\\a}\\
\hline
\end{array}
\]
\end{exa}

Now I can define the notion of regular relation and regular function:

\begin{defi}
An $n$-ary relation $R \subseteq \Sigma_0^* \times \cdots \times \Sigma_{n-1}^*$ is \emph{regular} if there is a finite automaton $A$ with input alphabet $(\Sigma_0 \cup \{\#\}) \times \cdots \times (\Sigma_{n-1} \cup \{\#\})$ such that:
\[R(w_0,\ldots,w_{n-1}) \iff A \text{ accepts } \tuplefy(w_0,\ldots,w_{n-1}).\]

An $n$-ary function $f: \Sigma_0^* \times \cdots \times \Sigma_{n-1}^* \to \Sigma_n^*$ is \emph{regular} if there is a finite automaton $A$ with input alphabet $(\Sigma_0 \cup \{\#\}) \times \cdots \times (\Sigma_n \cup \{\#\})$ such that:
\[f(w_0,\ldots,w_{n-1}) = w_n \iff A \text{ accepts } \tuplefy(w_0,\ldots,w_n).\]
\end{defi}

Regular relations and functions are a key part of the analysis of various automaton logics. For instance:
\begin{defi}
Given a finite alphabet $\Sigma$, let $\mathcal{W}_{\Sigma} = (\Sigma^*, \leq, =_{el}, S_a |_{a \in \Sigma})$ where:
\begin{itemize}
\item $\leq$ is the prefix relation on words,
\item $=_{el}$ is the equal length relation on words,
\item $S_a$ is the \emph{Successor-$a$} unary operation, which appends an $a$ onto the end of a word.
\end{itemize}
I call $\mathcal{W}_\Sigma$ the \emph{regular word logic over $\Sigma$} \footnote{While this language may at first seem artificial, it is equally expressive with the \emph{Weak Second Order Theory of One Successor}, the theory of natural numbers, finite sets of natural numbers, the $+1$ operation, and containment.}.
\end{defi}

\begin{prop}\label{wordlogic}
If $R$ is a relation on $\Sigma^*$, the following are equivalent:
\begin{itemize}
\item $R$ is regular,
\item $R$ is given by a formula $\phi$ in the language of $\mathcal{W}_{\Sigma}$.
\end{itemize}
\end{prop}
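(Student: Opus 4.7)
The plan is to prove the two directions separately. The ``formula implies regular'' direction proceeds by structural induction on $\phi$; the ``regular implies formula'' direction requires encoding the runs of an automaton as tuples of auxiliary words quantified in the logic, and is the more technical step.

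For the forward direction, each atomic formula defines a regular relation, witnessed by a small explicit automaton reading $\tuplefy$ of its arguments: $x \leq y$ checks character-by-character agreement until $x$ is exhausted with $\#$s; $x =_{el} y$ checks simultaneous exhaustion; $y = S_a(x)$ checks agreement until $x$ is exhausted, then that $y$'s next character is $a$. Closure under Boolean connectives is standard: conjunction and disjunction via the product construction, negation via Proposition~\ref{detlemma}. Closure under existential quantification is handled by projecting out the bound variable, which amounts to nondeterministically guessing its value character-by-character. The subtlety is that $\tuplefy$ pads to the maximum length, so projecting a variable of unknown length requires allowing the automaton to consume trailing $\#$s freely in any remaining slot that finishes before the others.

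For the backward direction, given a deterministic automaton $A = (\Sigma_\#, Q, q_0, \delta, F)$ recognizing $R$, with $\Sigma_\#$ the padded product alphabet, I would write a formula $\phi(x_0, \ldots, x_{n-1})$ asserting the existence of a valid accepting run. Concretely, existentially quantify a word $y_q$ for each state $q \in Q$, intended to record (along a common-length word) the positions at which the run is in state $q$. Assuming $|\Sigma| \geq 2$, fix two distinguished characters of $\Sigma$ to play the roles of boolean $0$ and $1$; the equal-length relation $=_{el}$ fixes the length of each $y_q$, and positions in $\tuplefy(\vec x)$ are referenced through their prefix witnesses $z \leq x$. The statement ``the character at position $|z|$ of $x$ is $a$'' becomes $S_a(z) \leq x$. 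Using these primitives, the four conditions --- exactly one $y_q$ marks each position, $y_{q_0}$ marks position $0$, the mark at the final position lies in $F$, and transitions are consistent with $\delta$ at every intermediate position --- all translate directly to first-order formulas.

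The main obstacle is this coding step: the formula must reliably express that the tuple of $y_q$'s really encodes a run of $A$, while simultaneously tracking positions across multiple input words of different lengths padded by $\#$. A separate argument handles $|\Sigma| = 1$, where there is no spare letter to serve as a boolean flag; but there regular subsets of $\{a\}^*$ are ultimately periodic and directly definable from $S_a$ and $=_{el}$ alone, so the case reduces to an inspection.
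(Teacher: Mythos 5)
The paper offers no proof of this proposition at all: it is stated as background, with only the footnote identifying $\mathcal{W}_\Sigma$ with WS1S, so there is nothing internal to compare your argument against except the classical B\"{u}chi--Elgot--Trakhtenbrot correspondence it implicitly invokes. Your two directions follow that standard route, and for $|\Sigma| \geq 2$ the sketch is sound: the atomic relations are plainly regular; Boolean closure via products and Proposition~\ref{detlemma} is standard; the padding issue in existential projection (the bound variable may be longer than every free variable, so acceptance must be relaxed to reachability of a final state along transitions whose remaining coordinates are all $\#$) is exactly the right thing to worry about; and the converse's run-encoding --- one existentially quantified word $y_q$ per state, positions addressed by prefix witnesses $z \leq x$ with lengths synchronized by $=_{el}$, character lookup via $S_a(z) \leq x$ --- is the correct and complete idea.

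The genuine gap is your treatment of $|\Sigma| = 1$. There $\mathcal{W}_{\{a\}}$ is isomorphic to $(\mathbb{N}, \leq, =, +1)$, whose first-order definable unary sets are exactly the finite and cofinite ones; the set of even-length words over $\{a\}$ is regular and ultimately periodic but \emph{not} definable from $S_a$, $\leq$, and $=_{el}$ alone. So ``ultimately periodic, hence definable by inspection'' is false, and in fact the equivalence as stated fails for a one-letter alphabet: your instinct that the missing spare letter is a real obstruction was right, but it cannot be patched, because that spare letter is precisely what simulates monadic second-order quantification over positions. The honest fix is to record the hypothesis $|\Sigma| \geq 2$ (harmless for the uses made of the proposition later in the paper) rather than to claim the unary case reduces to an inspection.
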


Hence my interest in regular functions. If $\phi(x_0,\ldots,x_n)$ is a formula in the language of $\mathcal{W}_\Sigma$ such that: \[\forall x_0,\ldots,x_{n-1} \exists x_n: \phi(x_0,\ldots,x_n),\]
Then, since lexicographic ordering $<_L$ is regular and well-founded, the following relation is regular
\[\psi(x_0,\ldots,x_n) \iff \phi(x_0,\ldots,x_n) \wedge \nexists y:\left[ y <_L x_n \wedge \phi(x_0,\ldots,y)\right],\]
And also a function. Restrictions of this sort are called \emph{Skolem functions}.

It is also worth noting that a classification of the regular functions also yields a classification of the regular languages, since for any regular language $R$, the characteristic function \[\mathcal{X}_R: w \mapsto \begin{cases} 1 & w \in R \\ 0 & w \notin R \end{cases}\]
Is regular.

The idea of achieving quantifier elimination on the regular word logic via function composition was inspired in part by a theorem in \cite{sql}, which decomposes regular predicates in terms of shifting operations, character-wise operations, and a univeral regular predicate.

Finally, in this section, I connect Moore Machines and Finite Automata:

\begin{prop}
Given a Moore Machine $M = (\Sigma, Q, q_0, \Gamma,\delta,\epsilon)$, the functions $M, M^\trunc,$ and $M^\rest$ are regular.
\end{prop}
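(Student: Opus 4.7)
The plan is to show regularity of each function by constructing, directly, a finite automaton whose accepted pairs (under the $\tuplefy$ encoding) are exactly the graph of the function. The unifying idea is to build an automaton whose non-sink states are just $Q$, initially at $q_0$, which simulates $M$ on the $\Sigma$-component of the input pair while simultaneously verifying, step by step, that the $\Gamma$-component agrees with the output that $M$ would produce. If the verification ever fails, the automaton drops into a rejecting sink; otherwise it accepts.

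For $M^\trunc$, which is length preserving, every position of $\tuplefy(w,v)$ is a pair $(a,b)\in \Sigma\times\Gamma$. I would take state set $Q \cup \{q_{\text{rej}}\}$, initial state $q_0$, accepting set $Q$, and define the transition on $(a,b)$ from $q\in Q$ to be $\delta(q,a)$ if $b = \epsilon(q)$ and $q_{\text{rej}}$ otherwise; any transition involving a $\#$ likewise goes to $q_{\text{rej}}$. For $M^\rest$, only the timing of the check changes: from $q$ on $(a,b)$, let $q' = \delta(q,a)$, and accept the transition iff $b = \epsilon(q')$, reflecting that $M^\rest$ omits the initial state. For $M$ itself, the complication is that $|M(w)| = |w|+1$, so $\tuplefy(w,M(w))$ terminates in a pair $(\#, \epsilon(r[|w|]))$. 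I would add a new accepting sink $q_{\text{acc}}$, keep the $M^\trunc$-style transitions on $\Sigma\times\Gamma$, and send $(\#,b)$ from $q\in Q$ to $q_{\text{acc}}$ if $b=\epsilon(q)$ and to $q_{\text{rej}}$ otherwise, with every other input pattern going to $q_{\text{rej}}$; accepting set is $\{q_{\text{acc}}\}$.

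The correctness of each construction follows by a short induction on $|w|$ proving the invariant that after consuming the first $i$ input pairs the automaton is in state $r[i]$ (where $r$ is the run of $M$ on $w$) exactly when the $\Gamma$-components agree with $\epsilon(r[0]),\ldots,\epsilon(r[i-1])$ (or the shifted version, for $M^\rest$); otherwise the automaton is in $q_{\text{rej}}$. The main obstacle is not conceptual but a careful bookkeeping one: aligning the automaton's step index with the two different length conventions and with $\tuplefy$'s handling of $\#$. In particular, I should check the edge case $w = \varepsilon$, where $M^\trunc(\varepsilon) = M^\rest(\varepsilon) = \varepsilon$ is handled by $q_0$ already being accepting, while $M(\varepsilon) = \epsilon(q_0)$ is handled by the single transition $(\#,\epsilon(q_0))$ from $q_0$ to $q_{\text{acc}}$.
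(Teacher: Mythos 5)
Your construction is essentially the paper's: the paper also builds a deterministic automaton that simulates $M$ on the $\Sigma$-component of each input pair while checking the $\Gamma$-component against $\epsilon$ of the appropriate state, using state set $Q \times \{0,1\}$ with a ``mismatch so far'' bit where you use $Q \cup \{q_{\text{rej}}\}$; the two are interchangeable. Your timing for $M^{\trunc}$ versus $M^{\rest}$ (check $\epsilon(q)$ before transitioning versus $\epsilon(\delta(q,a))$ after) is correct, as is the handling of length mismatches via pairs containing $\#$.

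One concrete slip in the case of $M$ itself: $q_{\text{acc}}$ must not be an accepting \emph{sink}. If $q_{\text{acc}}$ loops to itself on all inputs, then for any $c \in \Gamma$ the word $\tuplefy(w, M(w)^{\frown}c)$ --- which is a legitimate $\tuplefy$ image ending in two characters whose first component is $\#$ --- reaches $q_{\text{acc}}$ after the first of them and remains there, so the automaton accepts a pair whose second component is not $M(w)$. Every transition out of $q_{\text{acc}}$ should instead lead to $q_{\text{rej}}$, so that acceptance certifies that exactly one trailing $(\#,b)$ pair was read and that $b = \epsilon(r[|w|])$. With that fix, the inductive invariant you state goes through and all three constructions are correct.
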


\begin{proof}
Construct a deterministic finite automaton that keeps track of two pieces of information: the state the Moore Machine is expected to be in at any particular point, and a boolean value to keep track of whether the proposed output has so far been correct. Additionally, in the case of $M$, there will be a final character of the form $\colv{\#\\o}$ and a small amount of information must be kept track of to handle this correctly.

For example, for $M^\trunc$, let: \[A = (\Sigma \times \Gamma, Q \times \{0,1\}, (q_0,0), \delta', Q \times \{0\}),\]
Where:
\[\delta'((q,i),(a,o)) = \left(\delta(q,a), \begin{cases} 0 & i = 0 \wedge \epsilon(q) = o\\ 1 & \text{otherwise}\end{cases}\right).\]
\end{proof}

\begin{prop}
Every strictly causal, length-preserving, regular function is given by $M^\trunc$ for some Moore Machine $M$. Every causal, length-preserving function is given by $M^\rest$ for some Moore Machine $M$.
\end{prop}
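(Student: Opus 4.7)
The plan is to take a deterministic automaton recognizing the graph of $f$ and reinterpret its transitions as a Moore Machine whose output function predicts the correct next character. By Proposition \ref{detlemma}, I may assume a DFA $A = (\Sigma \times \Gamma, Q, q_0, \delta, F)$ whose accepted language is exactly $\{\tuplefy(w, f(w)) : w \in \Sigma^*\}$; since $f$ is length-preserving, the padding character $\#$ never appears in these accepted words, so I restrict attention to $\#$-free inputs. For each $u \in \Sigma^*$, let $\hat{q}(u) := \delta^*(q_0, \tuplefy(u, f(u)))$ denote the $A$-state reached after reading the correct tuplefy-input for $u$.

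The key technical claim is well-definedness of a next-output prediction: whenever $\hat{q}(u) = \hat{q}(u')$, one has $f(ua)[|u|] = f(u'a)[|u'|]$ for every $a \in \Sigma$. Indeed, causality gives $f(ua) = f(u) \cdot f(ua)[|u|]$, so for any $c \in \Gamma$, $\tuplefy(ua, f(u) \cdot c)$ is accepted by $A$ iff $c = f(ua)[|u|]$. Since $\hat{q}(u) = \hat{q}(u')$ means $A$ accepts exactly the same augmented suffixes from either point, the analogous equivalence holds for $u'$, forcing $f(ua)[|u|] = f(u'a)[|u'|]$. In the strictly causal case this common value is moreover independent of $a$, yielding a partial map $\epsilon^\sharp : Q \rightharpoonup \Gamma$; in the causal case the prediction still depends on $a$, giving $\epsilon^\sharp : Q \times \Sigma \rightharpoonup \Gamma$.

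For the strictly causal case, construct $M = (\Sigma, Q, q_0, \Gamma, \delta_M, \epsilon)$ by extending $\epsilon^\sharp$ to a total function $\epsilon$ (arbitrary on unreachable states) and setting $\delta_M(q, a) := \delta(q, (a, \epsilon(q)))$. A short induction on $i$ shows that $M$'s run on $w$ satisfies $r[i] = \hat{q}(w[0]\cdots w[i-1])$, so $\epsilon(r[i]) = f(w)[i]$ and hence $M^\trunc(w) = f(w)$. For the causal case, the prediction depends on the most recently read input character, so I enlarge the state space to $Q \times (\Gamma \cup \{\bot\})$ in order to carry that character's output: set $\delta_M((q, \gamma), a) := (\delta(q, (a, c)), c)$ with $c = \epsilon^\sharp(q, a)$, initial state $(q_0, \bot)$, and output $\epsilon(q, \gamma) := \gamma$ (with any fixed value at $\bot$). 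Tracing the run then yields $M^\rest(w) = f(w)$. The principal obstacle is the well-definedness claim in the previous paragraph; the remaining verifications are routine inductions on word length.
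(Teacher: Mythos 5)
Your proof is correct, but it takes a genuinely different route from the paper's. The paper passes through Proposition \ref{wordlogic}: it translates the graph of $f$ into a formula of $\mathcal{W}_{\Sigma\cup\Gamma}$, asserts (via ``simple tricks'' in that logic) that for each output character $u$ the set of input prefixes forcing next output $u$ is regular, takes recognizing automata $A_u$ for these disjoint sets, and builds $M$ as their parallel product with $\epsilon$ reading off which component accepts. You instead work directly with a DFA $A$ for $\tuplefy(w,f(w))$ and prove a Myhill--Nerode-style claim: the state $\hat{q}(u)$ already determines the next output character, because acceptance of the one-step extensions $\tuplefy(ua, f(u)\cdot c)$ is decided by $\delta(\hat{q}(u),(a,c)) \in F$ and picks out exactly $c = f(ua)[|u|]$. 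Feeding the predicted character back into $\delta$ then makes $A$ itself the transition structure of the Moore Machine. What your approach buys is self-containedness: the well-definedness claim is precisely the content the paper delegates to unspecified manipulations in the word logic (indeed, it exhibits each set $\{u : \text{next output is } c\}$ as a union of $\hat{q}$-preimages, hence regular), and it avoids invoking Proposition \ref{wordlogic} entirely, needing only determinization (Proposition \ref{detlemma}). You also spell out the causal case, including the state augmentation by $\Gamma\cup\{\bot\}$ needed because the prediction there depends on the most recent input character, where the paper only says the $M^{\rest}$ case is ``similar.'' What the paper's route buys is brevity and uniformity with its logic-centric framing. One small point worth making explicit in your write-up: well-definedness of $\epsilon^\sharp$ is only needed (and only proved) on reachable states of the form $\hat{q}(u)$, which is exactly where your induction evaluates it, so the arbitrary extension is harmless.
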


\begin{proof}
I prove this for a binary, strictly causal, length-preserving, regular function $f$. The proof is nearly identical in the general case.
Let $f: \Sigma \to \Gamma$ be given. By Proposition \ref{wordlogic}, there is a formula $\phi(w_0,w_1)$ in the language $\mathcal{W}_{\Sigma \cup \Gamma}$ such that \[\phi(w_0,w_1) \iff f(w_0) = w_1.\]

Since $f$ is a strictly causal function, the first $n-1$ characters of the input determine the $n$th character of the output. By simple tricks in $\mathcal{W}_{\Sigma \cup \Gamma}$, one can construct a formula $\phi_u$ for each $u \in \Gamma$ such that $\phi_u$ is true of exactly those sequences of characters that produce an output of $u$ in the next place. These $\phi_u$ describe a collection of regular sets which partition all of $\Sigma^*$. Let $A_u$ be a finite automaton that recognizes the corresponding collection.

Now to construct the Moore Machine $M$. It should run each of the $A_u$ in parallel to determine its state. Since the $A_u$ recognize disjoint collections, exactly one of the $A_u$ will be in an accept state at any time. The $\epsilon$ function for the Moore Machine should take in the tuple of states for the $A_u$ and output the one which is in an accept state. It suffices now to check that $M^\trunc$ is identically $f$.

The proof for $M^\rest$ is similar.
\end{proof}

Of course, there are plenty of other regular functions. In this paper I provide a small set of functions whose closure under multi-ary composition generates all of them. In the next section, I will interpret the proof of Proposition \ref{detlemma} to reduce the problem to studying functions given by the actions of Moore Machines and Reverse Moore Machines. Moore Machines allow one to construct functions which transmit information only to the right (towards the end of the inputs), but they need to be combined with a method to transmit information to the left. Consider, for instance, the $\rest$ function, which removes the first character of a string, shifting to the left. Alternately, a function which outputs $00$ or $01$ depending on whether there are an even or odd number of $0$s in the input. These functions are regular, but not causal. 

\section{Determinization and Harvesting}

In this section, I define Reverse Moore Machines, and provide a technique for decomposing a length-preserving regular function as a multivariable composition of a Moore Machine function, a Reverse Moore Machine function, and character-wise maps to connect them. This will require several stages. First, I will briefly discuss some notation for discussing character-wise functions. Second, I will introduce the notion of a Reverse Moore Machine similar to the Moore Machines introduced in section 1. Third, I will present the decomposition. This decomposition is based on the classical powerset determinization construction, viewed from a novel perspective. In the next section, I will discuss handling general regular functions.

First, some notation for functions which operate character-wise:

\begin{defi}
Given two finite alphabets $\Sigma, \Gamma$, and a function $f: \Sigma \to \Gamma$, call the function $\cw_f: \Sigma^* \to \Gamma^*$ which applies $f$ to each character of the input, the \emph{character-wise $f$ map}.

If $f$ is an $n$-ary function for $n > 1$, one can also make sense of $\cw_f$. Let \[f: \Sigma_0 \times \cdots \times \Sigma_{n-1} \to \Gamma.\] Then one defines the partial function \[\cw_f: \Sigma_0^* \times \cdots \times \Sigma_{n-1}^* \to \Gamma^*,\] which takes in inputs of all the same lengths and produces an output of the same length, where:
\[\cw_f(w_0,\ldots,w_{n-1}) = \cw_f(\tuplefy(w_0,\ldots,w_{n-1})).\]
Noting that $\tuplefy$ does not produce characters with $\#$ in them for equal-length inputs.
\end{defi}

Of course, $\cw_f$ is causal and length preserving (and reverse-causal, when I define the notion). Every Moore Machine function can be written as the action of the corresponding transparent Moore Machine composed with a bitwise application of its $\epsilon$ function.

I now define some notation for dealing with Reverse Moore Machines, analogous to the notation established previously.

\begin{defi}
A \emph{Reverse Moore Machine} is a tuple: $(\Sigma, Q, q_f, \Gamma, \delta, \epsilon)$.
\begin{itemize}
\item $\Sigma$ is a finite set of input characters (alphabet).
\item $Q$ is a finite set of states.
\item $q_f \in Q$ is the final state.
\item $\Gamma$ is a finite set of output characters.
\item $\delta: Q \times \Sigma \to Q$ is the reverse transition function.
\item $\epsilon: \Sigma \to \Gamma$ is the output function.
\end{itemize}

Given an input word $w \in \Sigma^*$, construct a run $r$ and output $o$ such that:
\begin{itemize}
\item $r[|w|] = q_f$.
\item $r[i] = \delta(r[i+1],w[i])$ for $0 \leq i < |w|$.
\item $o[i] = \epsilon(r[i])$, for $0 \leq i \leq |w|$.
\end{itemize}
\end{defi}

\begin{exa}
Here I show below how inputs, states, and outputs, respectively, line up according to this notation.
\begin{center}\includegraphics{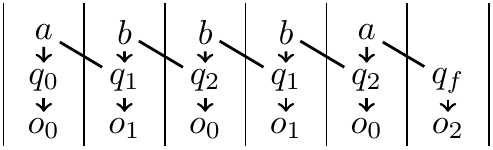} \end{center}
\end{exa}

To prevent type mismatches, I will denote Reverse Moore Machines with letters $R,P$ as opposed to letters $M,N$ for Moore Machines.

\begin{defi}
A Reverse Moore Machine is said to be \emph{transparent} if $\Gamma = Q$ and $\epsilon$ is the identity.
In this case, the Reverse Moore Machine is presented as a tuple: $(\Sigma, Q, q_f, \delta)$.
\end{defi}

As before, one can interpret Reverse Moore Machines as functions from input words to output words:
\begin{defi}
Given a Reverse Moore Machine $R = (\Sigma, Q, q_f, \Gamma, \delta, \epsilon)$, and word $w \in \Sigma^*$, denote by:
\begin{itemize}
\item $R(w)$ the output of $R$ on input $w$.
\item $R^\trunc(w)$ the output of $R$ on input $w$ with the last state removed.
\item $R^\rest(w)$ the output of $R$ on input $w$ with the first state (the start state) removed.
\end{itemize} 
\end{defi}

Analogous to the notions of causal and strictly causal, one has notions of reverse causal and strictly reverse causal:

\begin{defi}
Given two alphabets, $\Sigma, \Gamma$, a function on words $f: \Sigma^* \to \Gamma^*$ is said to be:
\begin{itemize}
\item \emph{reverse causal} if $f(w)[i]$ depends only on $w[i], w[i+1], \ldots$.
\item \emph{strictly reverse causal} if $f(w)[i]$ depends only on $w[i+1], w[i+2], \ldots$.
\end{itemize}
\end{defi}

Of course, a function is character-wise iff it is causal and reverse causal.

One also has a notion of reverse deterministic automaton:
\begin{defi}
A finite automaton $A = (\Sigma, Q, I, \delta, F)$ is \emph{reverse deterministic} if:
\begin{itemize}
\item $F$ is a singleton.
\item For each $q, a$, there is a unique $q'$ such that $(q',a,q) \in \delta$.
\end{itemize}
\end{defi}

A reverse deterministic automaton can also be viewed as a transparent Reverse Moore Machine.

I now have the notation to state the main result:
\begin{thm}\label{detharvest}
Any length-preserving regular function can be written as a multivariable composition of the form:
\[f(w) = R^\trunc(\cw_{\pair}(w,M^\trunc(w))).\]
\end{thm}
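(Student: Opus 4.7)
The plan is to recast the claim as a two-pass variant of the powerset determinization (Proposition \ref{detlemma}). The forward Moore Machine $M$ executes the standard subset construction on the projection of the automaton onto $\Sigma$; the map $\cw_{\pair}$ pairs each input letter with the forward subset held at that position; and the Reverse Moore Machine $R$ runs a dual backward subset construction, using the paired forward data to read off the unique correct output character at each index.

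Given a length-preserving regular $f : \Sigma^* \to \Gamma^*$, Proposition \ref{detlemma} yields a deterministic automaton $A = (\Sigma \times \Gamma, Q, q_0, \delta_A, F)$ whose language is the graph of $f$; equal lengths mean no $\#$ ever appears. For each input $w$ and each position $i$, let $F_i \subseteq Q$ be the set of states reachable in $A$ at time $i$ via $w[0..i-1]$ paired with an arbitrary word in $\Gamma^i$. The sets $F_i$ evolve by the ordinary powerset rule, and so are produced by a transparent Moore Machine $M$ over $\Sigma$ with state set $2^Q$. Then $M^\trunc(w)[i] = F_i$, and $\cw_{\pair}(w, M^\trunc(w))[i] = (w[i], F_i)$.

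For the reverse pass, define $E_i \subseteq Q$ as the set of states from which the suffix $w[i..|w|-1]$, paired with some continuation in $\Gamma^{|w|-i}$, admits a run of $A$ ending in $F$. One has $E_{|w|} = F$, and $E_i$ depends only on $E_{i+1}$ and $w[i]$ via the backward subset rule. Take $R$ to be a Reverse Moore Machine with alphabet $\Sigma \times 2^Q$ whose state at position $i$ is the pair $(E_i, u_i)$, where $u_i \in \Gamma$ is the output character to be emitted there. The reverse transition, given $(E_{i+1}, u_{i+1})$ and input letter $(w[i], F_i)$, computes $E_i$ by backward subset closure and selects $u_i$ as the unique $u \in \Gamma$ for which some $q \in F_i$ satisfies $\delta_A(q, (w[i], u)) \in E_{i+1}$. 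The output function $\epsilon$ projects onto the second coordinate.

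The main subtlety is the existence and uniqueness of that $u$. Existence is witnessed by $u = f(w)[i]$, since the accepting run of $A$ on $(w, f(w))$ necessarily passes through $F_i \cap E_i$ at time $i$ and through $E_{i+1}$ at time $i+1$. Uniqueness is forced because $f$ is a function: any other admissible $u$ would extend to an accepting run of $A$ on $(w, v')$ with $v' \ne f(w)$, contradicting that the graph of $f$ contains only $(w, f(w))$. On inputs to $R$ that do not actually arise from a forward pass, one sets $u_i$ arbitrarily; this does not affect the composed function. A short right-to-left induction then yields $r[i] = (E_i, f(w)[i])$, so that $\epsilon(r[i]) = f(w)[i]$ for all $i < |w|$, which is exactly the claimed factorization.
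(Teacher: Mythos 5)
Your proof is correct, and it shares the paper's overall two-pass architecture (a forward powerset determinization paired character-wise with the input, then consumed by a backward pass), but the backward pass is implemented by a genuinely different mechanism. The paper first applies Lemma \ref{automatize} to absorb the output alphabet into the state space, yielding a nondeterministic automaton $B$ over $\Sigma$ with states $Q \times \Gamma$ whose accepting runs carry $f(w)$; it then determinizes $B$ and runs the harvester $\harv(B)$, whose backward state at each position is a \emph{single} state of $B$, reconstructing one explicit accepting run via least-element tie-breaking (Lemma \ref{decomp}) and reading the output off that run by projection. You instead keep the graph automaton $A$ over $\Sigma \times \Gamma$, determinize its $\Sigma$-projection forward, and let the reverse machine carry the whole co-reachable set $E_i$ together with the emitted character $u_i$, which is pinned down by an explicit appeal to the single-valuedness of $f$; you correctly note that both $F_i$ and $E_{i+1}$ are needed for that uniqueness, since reachability alone admits spurious characters leading into dead ends. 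Your route is more self-contained and avoids the harvester's tie-breaking and dummy $q_F$ state, at the cost of a larger reverse state space ($\mathcal{P}(Q) \times \Gamma$ rather than $(Q \times \Gamma) \cup \{q_F\}$) and of not producing the paper's reusable Lemma \ref{decomp}, which harvests an accepting run of an \emph{arbitrary} nondeterministic automaton with no functionality hypothesis and is what lets the paper treat the construction as a statement about determinization itself rather than about function graphs.
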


It's worth noting that $\cw_{\pair}$ has the same action here as $\tuplefy$. I've chosen to use $\cw_{\pair}$ here to indicate that I'm not using the length-padding features of $\tuplefy$. 

A few lemmas must be proved first:

\begin{lem}\label{automatize}
Given a length-preserving regular function $f: \Sigma^* \to \Gamma$, there is an automaton $B$ with state set $Q$ and map $\epsilon: Q \to \Gamma$ such that $f(w)$ is $\trunc \circ \cw_\epsilon$ applied to any run of $B$ on input $w$.
\end{lem}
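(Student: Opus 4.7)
The plan is to build $B$ over input alphabet $\Sigma$ by coupling each state of an automaton for the graph of $f$ with a ``guess'' of the next output character, so that $\epsilon$ simply reads the guess off the state.

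By the definition of regularity, there is a finite automaton $A = (\Sigma \times \Gamma, Q_A, I_A, \delta_A, F_A)$ accepting $\{\cw_{\pair}(w, f(w)) : w \in \Sigma^*\}$; no $\#$-padding is needed because $f$ is length-preserving. I take $Q := Q_A \times \Gamma$, initial states $I_A \times \Gamma$, final states $F_A \times \Gamma$, and declare $((q, g), a, (q', g')) \in \delta_B$ exactly when $(q, (a, g), q') \in \delta_A$. The second-coordinate projection serves as $\epsilon : Q \to \Gamma$.

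Now a run $(q_0, g_0), \ldots, (q_{|w|}, g_{|w|})$ of $B$ on $w$ has its $q$-coordinates tracing an $A$-run on the paired word $\cw_{\pair}(w, g_0 \cdots g_{|w|-1})$. Whenever the $B$-run lands in $F_B$, this $A$-run is accepting, and since $A$ accepts only graphs of $f$, we must have $g_0 \cdots g_{|w|-1} = f(w)$. Applying $\cw_\epsilon$ to the run produces $g_0, g_1, \ldots, g_{|w|}$, and $\trunc$ strips the trailing $g_{|w|}$, leaving exactly $f(w)$.

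The main subtlety is the ``any run'' clause: wrong-guess runs do not reach $F_B$ and need not encode $f(w)$. I address this by trimming $B$ to only those states and transitions lying on some $I_B$-to-$F_B$ path, and then take the semantics of ``run'' to be those completed runs which reach a final state. Totality of $f$ guarantees that at least one such run exists for every $w \in \Sigma^*$, and functionality of $f$ forces every such run to share the same $g$-sequence, namely $f(w)$, so that $\trunc \circ \cw_\epsilon$ uniformly recovers $f(w)$.
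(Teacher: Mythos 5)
Your construction is the same as the paper's: couple each state of the graph-recognizing automaton $A$ with a guessed output character, set $((q,g),a,(q',g')) \in \delta_B$ iff $(q,(a,g),q') \in \delta_A$, and let $\epsilon$ project to the $\Gamma$-coordinate, so that every accepting run of $B$ carries $\Gamma$-component $S_o(f(w))$ and truncation recovers $f(w)$. The argument is correct, and your explicit note that only accepting runs are meant (with trimming to make ``any run'' literally true) addresses a point the paper leaves implicit.
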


\begin{proof}
Given the length-preserving function $f: \Sigma^* \to \Gamma^*$, let: \[C = ((\Sigma \cup \{\#\}) \times (\Gamma \cup \{\#\}), Q, I, \delta, F)\] Witness the regularity of $f$. Since $C$ automatically rejects any inputs with a $\#$ in them, one can restrict it to:
\[A = (\Sigma \times \Gamma, Q, I, \delta, F)\]
Recognizing the same set of inputs.

Construct an automaton:
\[B = (\Sigma, Q \times \Gamma, I \times \Gamma, \delta', F \times \Gamma),\]
Where:
\[((q,o),a,(q',o')) \in \delta' \iff (q,(a,o),q') \in \delta\]
$B$ is nondeterministic. I assert that runs of $B$ on input $w$ will necessarily have $\Gamma$ component $S_o(f(w))$, for some $o \in \Gamma$. Firstly, it should be clear that if $r$ is an accepting run of $A$ on input $\tuplefy(w,f(w))$, then $\tuplefy(r,S_o(f(w)))$ is an accepting run of $B$ on input $w$ for any $o \in \Gamma$.

Now, suppose $\tuplefy(r',g)$ is an accepting run of $B$ on input $w$. Then it is easy to check that $r'$ is an accepting run of $A$ on input $\tuplefy(w,\trunc(g))$. Since $A$ witnessed $f$ being a regular function, $\trunc(g)$ must be $f(w)$. This completes the proof for $\epsilon$ taking the $\Gamma$ component of the states of $B$.
\end{proof}

\begin{defi}
Given a nondeterministic automaton $A = (\Sigma, Q, I, \delta, F)$, define its \emph{determinization}:
\[\det(A) = (\Sigma, \mathcal{P}(Q), \{I\}, \delta', \{E \subset Q: E \cap F \neq \emptyset\}),\]
Where:
\[\delta'(K,a) = \{q' \in Q | \exists q \in K: \delta(q,a,q')\}.\]
\end{defi}

Typically throughout this paper I will be concerned with the determinization as a transparent Moore Machine, so the set of final states doesn't matter much.

\begin{defi}
Given a nondeterministic automaton $A = (\Sigma,Q, I \delta, F)$, define its \emph{harvester}:
\[\harv(A) = (\Sigma \times \mathcal{P}(Q), Q \cup \{q_F\}, Q \cup \{q_F\}, \delta', \{q_F\}),\]
Where:
\begin{itemize}
\item The $q$ such that $\delta'(q,(a,K),q')$ (for $q' \in Q$) is given by the least $q'' \in K$ such that $\delta(q'',a,q')$ or $q_F$ if none exist.
\item The $q$ such that $\delta'(q,(a,K),q_F)$ is given by finding the least $q' \in F$ such that there is a $q'' \in K$ such that $\delta(q'',a,q')$ and then having $q$ be the least $q'' \in K$ such that $\delta(q'',a,q')$. If no such $q' \in F$ exists, then $q$ is given by $q_F$.
\end{itemize}
\end{defi}

The powerset determinization of an automaton $A$ produces an automaton that keeps track of, at every position, the set of states of $A$ which are reachable through some sequence of transitions, having read the input up to that point. However, not every one of these reachable states necessarily shows up in some accepting run: it may be that being in one state now means later on having to be in another state which it cannot transition out of, or that being in a state now dooms the automaton to being in a reject state once it has finished reading the input. In order to use the determinization to find an accepting run of the original automaton, provided there is an accepting run, one needs to start at the end and work backwards, all the while staying within states that one knows can be traced through a sequence of transitions back to a start state at the beginning.

Specifically, if the automaton is in a state $q$ which is reachable through some sequence of transitions after having read $S_a(w)$, then there must be at least one state $q'$ which is reachable through some sequence of transitions after having read $w$ such that reading $a$ takes it from state $q'$ to state $q$.

It is necessary to introduce an additional dummy state $q_F$ to start out in to make sure the harvester automaton is reverse-deterministic. Although this application of the harvester automaton will see only pairs of the form $(a,K)$ for $a$ some symbol being read by the original automaton and $K$ the set of states reachable immediately prior to reading that specific $a$, the automaton should be prepared to read in arbitrary input pairs. An invalid input will cause the reverse deterministic automaton to go into the dummy $q_F$ state. Additionally, a cheap fix is necessary to account for the fact that one doesn't know what state the run of the original automaton ends on and one needs a single ``final'' state for the reverse deterministic automaton to ``start out'' in. The dummy $q_F$ state represents all final states which are reachable. Truncating the run will remove this dummy state.

\begin{lem}\label{decomp}
Suppose one has a nondeterministic automaton $A$, and valid input $w$. Let $p$ be the run of $\det(A)$ on input $w$. I claim that $\harv(A)$ on input $\tuplefy(w,\trunc(p))$ will have run $r$, an accepting run of $A$ on input $w$.
\end{lem}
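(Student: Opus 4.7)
The plan is to perform a reverse induction on position, showing that at each step $i$ the harvester's state $r[i]$ lies in the reachable set $p[i]$ and that the transition $\delta(r[i], w[i], r[i+1])$ holds in $A$; from this it follows that $r$, once we identify its $q_F$-endpoint with a suitable element of $F$, is a genuine accepting run of $A$ on $w$. The key observation is that the first bullet in the $\harv$ definition is exactly a ``backward search'' inside the reachable set $p[i]$: given a target $q'$, it returns the least predecessor $q'' \in p[i]$ with $\delta(q'', w[i], q')$, falling back to $q_F$ only when none exists. By the definition of $\det(A)$, namely $p[i+1] = \{q' : \exists q'' \in p[i],\ \delta(q'', w[i], q')\}$, such a predecessor must exist whenever $q' \in p[i+1]$. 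The second bullet plays the analogous role at the right boundary, choosing both a terminal state in $F$ and a predecessor into it, and it is here that the hypothesis that $w$ is a valid input gets used.

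Concretely, first I would unwind the input: $\tuplefy(w, \trunc(p))$ has length $|w|$ and its $i$-th letter is the pair $(w[i], p[i])$. The run $r$ of $\harv(A)$ as a Reverse Moore Machine then satisfies $r[|w|] = q_F$ and $r[i] = \delta'(r[i+1], (w[i], p[i]))$ for $0 \le i < |w|$. For the base case $i = |w|-1$, since $r[i+1] = q_F$ the second bullet of $\harv$ applies. Because $w$ is accepted, $p[|w|] \cap F$ is nonempty, and $p[|w|]$ is exactly the image of $p[|w|-1]$ under $\delta(\cdot, w[|w|-1], \cdot)$; hence a least $q^* \in F$ and a corresponding least predecessor in $p[|w|-1]$ both exist, and $r[|w|-1]$ is set to that predecessor, witnessing $r[|w|-1] \in p[|w|-1]$ together with a recorded transition into $q^* \in F$. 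For the inductive step, assuming $r[i+1] \in p[i+1]$, the definition of $\det(A)$ furnishes some $q'' \in p[i]$ with $\delta(q'', w[i], r[i+1])$, so the first bullet selects the least such $q''$ as $r[i]$, never resorting to $q_F$; this gives both $r[i] \in p[i]$ and $\delta(r[i], w[i], r[i+1])$.

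Putting everything together, every $r[i]$ for $0 \le i \le |w|-1$ lies in $Q$, consecutive pairs realize valid $A$-transitions under the corresponding letters of $w$, and $r[0] \in p[0] = I$. Reading $r[|w|]$ as the $q^* \in F$ selected at the base step then yields a bona fide accepting run of $A$ on $w$. The main delicacy I expect is the bookkeeping around $q_F$: reverse-determinism forces the dummy endpoint, and one has to argue that the validity of $w$ is exactly the hypothesis that prevents the backward run from remaining stuck in $q_F$ — a fact controlled entirely by the second bullet of the $\harv$ definition applied at the very first backward step, after which the inductive step propagates ``staying inside $p[\cdot]$'' all the way down to $i = 0$.
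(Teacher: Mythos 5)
Your proposal is correct and takes essentially the same approach as the paper's own proof: the downward induction maintaining $r[i] \in p[i]$ is exactly the formalization of the paper's remark that ``a reachable state can always be traced back to a reachable state,'' which is what prevents the backward run from falling into $q_F$ before the final position. Your treatment of the right boundary (using validity of $w$ to pick $q^* \in F$ and identifying the dummy $q_F$ endpoint with it) likewise matches the discussion the paper gives just before the lemma, only spelled out in more detail.
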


\begin{proof}
It suffices to show that the only occurrence of $q_F$ in $r$ is as the final character. By construction, $\harv(A)$ will satisfy the transition relations. As mentioned before the construction of $\harv(A)$ also prevents backwards transitioning into the $q_F$ state for this particular input, since a reachable state can always be traced back to a reachable state.
\end{proof}

By lemma \ref{automatize}, given a regular, length-preserving function $f$, there is a nondeterministic automaton $B$ which takes in a word $w$ and has a run $r$ which projects to $f(w)$. By Lemma \ref{decomp}: 
\[r = \harv(B)^{\trunc}(\cw_{\pair}(w,\det(B)^\trunc))\]
By modifying the outer Reverse Moore Machine, one can throw in the appropriate projection to its output map to produce $f(w)$. This completes the proof of Theorem \ref{detharvest}.

This is a remarkable result. Every length-preserving regular function can be computed in a two-step process: one pass forwards through the input leaving behind some information, then a pass backwards through the input with this additional information to directly produce the output. Two passes suffice; having more passes doesn't increase the expressive power.

\section{Length Modification}

The only functions this paper has dealt with so far were length-preserving. In this section, I show that most of the interesting behavior of regular functions was already captured in the length-preserving case.

\begin{prop}
Suppose $f: \Sigma \to \Gamma$ is a regular function. Then there is a fixed constant $c$ associated to $f$ such that $f(w)$ is no longer than $c + |w|$ for every $w$.
\end{prop}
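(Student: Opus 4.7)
The plan is a pumping argument against an automaton witnessing the regularity of $f$. Fix such an automaton $A$ with state set $Q$, so that $A$ accepts $\tuplefy(w,u)$ iff $f(w) = u$, and set $c = |Q|$. I claim no input $w$ satisfies $|f(w)| > |w| + c$. The intuition is that once the run of $A$ passes position $|w|$, every subsequent input symbol has $\#$ in the first coordinate, so any loop in that tail of the run can be excised to pump the output alone, producing a distinct accepted pair $\tuplefy(w, u')$ — impossible if $A$ encodes a function.

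Concretely, suppose for contradiction that $|f(w)| > |w| + c$ for some $w$, and let $r$ be an accepting run of $A$ on $\tuplefy(w, f(w))$, which has length $|f(w)|$. For each position $i$ with $|w| \leq i < |f(w)|$, the first coordinate of $\tuplefy(w, f(w))[i]$ is $\#$ while the second is the real character $f(w)[i]$. The segment $r[|w|], r[|w|+1], \ldots, r[|f(w)|]$ comprises $|f(w)| - |w| + 1 > |Q| + 1$ state visits, so by pigeonhole there are indices $|w| \leq i < j \leq |f(w)|$ with $r[i] = r[j]$.

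Form $r'$ by deleting $r[i+1], \ldots, r[j]$ from $r$, and let $u'$ be $f(w)$ with the characters at positions $i, \ldots, j-1$ deleted. Because $i \geq |w|$, the excised input positions all carry $\#$ in the first coordinate, so the shortened input word is exactly $\tuplefy(w, u')$; and $r'$ remains a valid accepting run of $A$ on it thanks to $r[i] = r[j]$ making the splice consistent with the transition relation. By the defining property of a regular function, $A$'s acceptance of $\tuplefy(w, u')$ forces $f(w) = u'$; but $|u'| = |f(w)| - (j-i) < |f(w)|$, so $u' \neq f(w)$, a contradiction. The only real thing to be careful about is confining the pigeonhole to the tail segment starting at position $|w|$: pumping any earlier part of the run would disturb the characters of $w$ itself, and the construction would produce an accepted $\tuplefy(w', u')$ with $w' \neq w$ rather than a second output for the same input.
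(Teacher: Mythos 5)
Your proposal is correct and is essentially the paper's own argument: both fix an automaton witnessing the regularity of $f$, set $c$ to its number of states, and run a pigeonhole/pumping argument on the tail of the run past position $|w|$, where the first coordinate is $\#$, to excise a loop and produce a second accepted output for the same input $w$, contradicting functionality. Your added care about confining the loop to the tail segment is a point the paper makes only implicitly, but the route is the same.
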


The proof is based on the pumping lemma. 

\begin{proof}
Let $A$ be a deterministic automaton with $c$ states accepting exactly words of the form $\tuplefy(w,f(w))$. Choose a specific $w$ and suppose $f(w)$ is longer than $c + |w|$. Imagine what happens as $A$ reads in $\tuplefy(w,f(w))$, specifically, after $w$ is finished, and $A$ is reading in characters of the form $\colv{\#\\o}$ for some $o \in \Gamma$. Because there are more positions like this than there are states of $A$, by the pigeonhole principle some two positions will have the same state, say at positions $i$ and $j$. Note however that if one removes all positions in $\tuplefy(w,f(w))$ between $i$ and $j$ (including $i$, excluding $j$) one still has an accepting run, of the form $\tuplefy(w,g)$ for some $g$ strictly shorter than $f(w)$. This contradicts the assumption that $A$ accepted exactly words of the form $\tuplefy(w,f(w))$.
\end{proof}

From this, one can also show that for any $n$-ary regular function $f$, there is a fixed constant $c$ associated to $f$ such that $f(w_0,\ldots,w_{n-1})$ is no longer than $c$ plus the length of the maximum input.

Note that the automaton $A$ in the proof never encounters the input character $\colv{\#\\\#}$. As such, one may assume that this character acts as the identity on the states of $A$. The resulting automaton recognizes all pairs of the form $\tuplefy(S_\#^b(w),f(w))$ for arbitrary $b$ ($S_\#^c$ simply represents a $c$-fold composition of the $S_\#$ function). By adding in a counter, one can recognize exactly the pairs of the form $\tuplefy(S_\#^c(w),f(w))$.

As such, the function which takes $S_\#^c(w)$ to $S_\#^d(f(w))$ for $d = |f(w)| - |w|$ is a length-preserving regular function $g$, and thus can be written as a composition of character-wise maps, truncated Moore Machines, and truncated Reverse Moore Machines as in Theorem 1.
Now $f$ can be written as:
\[f(w) = \unpad(g(S_\#^c(w))),\]
Where $\unpad$ removes final $\#$ characters. Note that one does not know how many final $\#$ characters there will be. For functions which reduce length, the number will be more than $c$ and could be as much as $c+|w|$. One might be tempted to try to replace $\unpad$ with some function like $S_\#^{-1}$ (or, even less suited to the task, $\trunc$), which either removes a single final $\#$ or leaves the word alone if it cannot. However, since there are regular functions which take words of arbitrary length and reduce them to length 1, the generating set needs a generator that can produce unbounded shortening as well. 

Note that if $f$ is $n$-ary for $n > 1$, it follows that:
\[f(w_0,\ldots,w_{n-1}) = \unpad(g(\tuplefy(S_\#^c(w_0),\ldots,S_\#^c(w_{n-1})))),\]
For some regular, length-preserving function $g$.

As such:

\begin{thm}
Any regular function can be written as a multivariable composition of:
\begin{itemize}
\item Truncated Moore Machines,
\item Truncated Reverse Moore Machines,
\item Character-wise maps,
\item $\tuplefy$ (allowing one to generate multiary character-wise maps),
\item $S_a$ for various $a$,
\item $\unpad$.
\end{itemize}
\end{thm}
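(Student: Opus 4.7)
The plan is to reduce the general case to the length-preserving case handled by Theorem~\ref{detharvest} via a padding construction, following the preparatory discussion. First I would invoke the preceding proposition to extract a constant $c$, depending only on $f$, such that $|f(w_0,\ldots,w_{n-1})| \le c + \max_i |w_i|$ for all inputs. This $c$ is the amount of trailing $\#$-slack I must reserve on each input so that a length-preserving intermediate function has room to hold the output.

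Next I would construct a length-preserving regular function $g$ that represents $f$ on padded inputs. Starting from a deterministic witness $A$ for the regularity of $f$, I would first argue that, without loss of generality, the all-$\#$ input character can be taken to act as the identity on states: since $A$ never sees this character on valid inputs, its behaviour there can be redefined freely without changing the accepted set. Consequently $A$ accepts every padded tuple of the form $\tuplefy(S_\#^b(w_0),\ldots,S_\#^b(w_{n-1}),f(w_0,\ldots,w_{n-1}))$ for arbitrary $b$. Taking a product with a finite counter that forces exactly $b=c$ on each input component then pinpoints the graph of a total, length-preserving, regular function $g$ on padded tuples. Viewing $\tuplefy$ of equal-length inputs as a single word over a product alphabet, Theorem~\ref{detharvest} decomposes $g$ as $R^\trunc(\cw_{\pair}(\cdot, M^\trunc(\cdot)))$ for some Moore Machine $M$ and Reverse Moore Machine $R$.

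Finally I would assemble $f$ from $g$ via
\[ f(w_0,\ldots,w_{n-1}) = \unpad\bigl(g(\tuplefy(S_\#^c(w_0),\ldots,S_\#^c(w_{n-1})))\bigr), \]
which uses exactly the listed generators: the $S_\#$'s produce the input padding, $\tuplefy$ merges the padded inputs to feed $g$, the inner $g$ expands into the already-allowed truncated Moore and Reverse Moore Machines joined by character-wise maps, and $\unpad$ strips the trailing $\#$'s from the result.

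The main obstacle I anticipate is not this final assembly but the construction of $g$: one must check both that the all-$\#$ character may be neutralised (a quotient-style argument on states that does not disturb the accepted set) and that the $\#$-count on each input component can be pinned to exactly $c$ via a finite counter, so that $g$ is genuinely total, length-preserving, and regular. Once $g$ is in hand the theorem follows immediately from Theorem~\ref{detharvest} together with the displayed formula.
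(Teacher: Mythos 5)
Your proposal is correct and follows essentially the same route as the paper: extract the length bound $c$ from the pumping argument, make the all-$\#$ character act as the identity and add a finite counter so that the padded graph defines a length-preserving regular function $g$, decompose $g$ via Theorem~\ref{detharvest}, and recover $f$ as $\unpad(g(\tuplefy(S_\#^c(w_0),\ldots,S_\#^c(w_{n-1}))))$. The points you flag as potential obstacles are exactly the ones the paper also handles (briefly) in the discussion preceding the theorem, so there is no substantive divergence.
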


It's worth noting here that this generating set is infinite. Specifically, there are an infinite number of Moore Machines and Reverse Moore Machines. There are also an infinite number of Character-wise maps, but this isn't essential -- one could use encoding methods to work purely with a single two-character alphabet (plus, optionally, the dummy character $\#$). 

The infinitude of the generating set, however, is essential. The easiest way to see this is to talk about period introduction. Provided the input to a regular function has a sufficiently long periodic portion in the middle, the output of the regular function will also have a long periodic portion in the middle (it suffices to verify this of the generators above). What's more, the period of the periodic portion of the output can only have prime factors which show up either in the periods of the periodic portions of the inputs or which are smaller than the number of states of the associated automaton to the regular function. However, one can easily build regular functions which introduce any prime factor into the periodicity of their inputs, so there must not be any bound on the sizes of associated automata to regular functions in the generating set. In summary:

\begin{prop}
Any set of regular functions which generates all regular functions under multivariable composition must be infinite.
\end{prop}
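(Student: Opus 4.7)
The plan is to formalize the ``period introduction'' invariant sketched in the preceding paragraph: identify a numerical quantity attached to each regular function that is controlled by compositions from any fixed finite generating set, and then exhibit a regular function whose value of this quantity is too large to be expressible. The invariant will be the set of primes that can appear as divisors of the period of a long periodic window in the output.

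First I would make precise the notion of periodic window. Say a word $v$ contains a $(p,\ell)$-window if some contiguous factor of $v$ of length $\ell$ is $\lfloor \ell/p \rfloor$ consecutive copies of a length-$p$ block. The key lemma is the following invariance statement: if $f$ is a regular function realized by an automaton with at most $N$ states, and the inputs to $f$ all contain $(p,\ell)$-windows at aligned positions with $\ell$ much larger than $N$ (and the windows sit well away from the ends of the inputs), then $f$'s output contains a periodic window whose period divides $p\cdot k$ for some $k\le N$. The justification is the standard pigeonhole argument: as the realizing automaton reads through the periodic portion, it must revisit a state within $N$ blocks, after which its state sequence cycles with some period $k\le N$; the output within the window therefore has period dividing $\operatorname{lcm}(p,pk)=pk$. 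In particular, every prime factor of the output period either divides $p$ or is at most $N$.

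Second, I would check this invariance is preserved by each generator named in Theorem~2, not just by (Reverse) Moore Machines: $\tuplefy$ and character-wise maps preserve periods inside aligned windows exactly; $S_a$ merely shifts a window by one position; $\unpad$ strips trailing $\#$'s, which is harmless if the window sits in the interior. With this in hand, an induction on the structure of a composition tree drawn from a finite set $\mathcal{G}$ of regular functions shows that if $N_0$ bounds the realizing-automaton sizes of the members of $\mathcal{G}$, then for any composition $f\in\langle\mathcal{G}\rangle$ and any constant input $a^n$ of sufficient length, every periodic window of $f(a^n,\ldots,a^n)$ has period whose prime factors are bounded by $N_0$.

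Third, I would defeat this bound by constructing, for each prime $q$, a regular function $f_q$ that contradicts the invariance once $q>N_0$. A clean choice is the length-preserving mod-$q$ indicator: $f_q\colon\{a\}^*\to\{0,1\}^*$ sending $a^n$ to $b_0b_1\cdots b_{n-1}$ with $b_i=1$ iff $q\mid i$. This is recognized by a $q$-state counter, so it is regular, and for $n\gg q$ its output has a huge periodic window of period exactly $q$. Choosing $q$ prime and larger than $N_0$ produces a regular function outside $\langle\mathcal{G}\rangle$, so $\mathcal{G}$ cannot be finite.

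The main obstacle is the invariance lemma in the multivariable setting: a composition tree may feed outputs of one generator into inputs of another, and the inputs at a node need not have perfectly aligned windows or windows far enough from the boundary. The remedy is to phrase the induction hypothesis quantitatively, tracking both a minimum window length $\ell$ and a minimum boundary margin that grow fast enough to absorb the shrinkage introduced by each generator (each generator eats at most a bounded number of characters off each end and multiplies the period by at most $N_0$). Once the hypothesis is set up with these parameters, the per-generator verifications are routine pigeonhole arguments on the realizing automata.
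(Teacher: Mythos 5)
Your proposal is essentially the paper's own argument: the paper proves this proposition only via the informal period-introduction sketch in the paragraph preceding the statement, and your write-up formalizes exactly that sketch (a periodic-window invariant preserved under composition, prime factors of output periods bounded by the generators' automaton sizes, and a mod-$q$ counter with $q$ prime and large to escape any finite bound). The one step where your justification is thinner than it looks is the invariance lemma for a general regular function presented by a graph-acceptor rather than a deterministic transducer --- the ``state sequence cycles'' pigeonhole does not apply directly to the accepting run on $\tuplefy(w,f(w))$, whose transitions depend on the unknown output characters, and one should instead route through the paper's Theorem~\ref{detharvest} two-pass decomposition (a deterministic forward pass followed by a deterministic backward pass, each of which admits the pigeonhole) --- but this is exactly the level of detail the paper itself elides.
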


I conclude this paper with a discussion of known results regarding the Krohn-Rhodes Theorem. First, I will provide a proof of the Krohn-Rhodes Theorem adapted from Ginzburg \cite{ginzburg}. Then I will use the Krohn-Rhodes Theorem to decompose the Moore Machine and Reverse Moore Machine generators into smaller, simpler generators. The idea of interpreting a the cascade given by the Krohn-Rhodes theorem as a composition of functions can be found in \cite{eilenberg}. This paper will spell out this composition precisely and in modern notation, as has been done for previous compositions. A few sections will be dedicated to cleaning up the resulting set of generators, followed by proposed future research.

\section{The Krohn-Rhodes Theorem}

In this section, I present the Krohn-Rhodes Theorem as adapted to the context of multivariate composition of regular functions. The original proof of the Krohn-Rhodes Theorem, in \cite{krohn}, was presented in terms of wreath products of semigroups. More modern presentations of the Krohn-Rhodes Theorem typically present it in terms of the cascade product of finite state transducers. 

The cascade product of two transducers $M_1, M_2$ is a single system consisting of both machines. First, machine $M_2$ reads in both the input to the system and the current state of $M_1$ to update its state. When it has finished, machine $M_1$ updates its state based only on the input to the system. Finally, an output is produced based on the states of $M_1$ and $M_2$. This reflects the reality of systems where updating the states of machines takes a small but appreciable amount of time. In a well designed system, $M_2$ should not have to wait for $M_1$ to finish its update before it can update its state. As such, $M_2$ uses the state of $M_1$ prior to reading the input to update.

The Krohn-Rhodes Theorem separates out two extremes of behavior for finite automata. In general, reading in an input character induces a function on the states of the automaton. This function may map two states to the same state or to separate states. At one extreme, it may act as a \emph{permutation} in which case it maps all states to separate states. In this case, it is possible to undo this action. One can recover the state before reading a character which acts as a permutation, provided one knows which character the automaton read. At the other extreme, an input character may act as a \emph{reset} in which case it maps all states to the same state. In this case all information about the previous state is lost.

\begin{defi}
A Moore Machine or deterministic automaton is said to be:
\begin{itemize}
\item A \emph{permutation automaton} if each of its inputs acts as a permutation on its states,
\item A \emph{reset automaton} if each of its inputs acts as a reset or the identity on its states,
\item A \emph{permutation-reset automaton} if each of its inputs acts as a permutation or a reset on its states.
\end{itemize}
\end{defi}

I now state the Krohn-Rhodes theorem, in a bit of an unusual fashion:

\begin{thm}[Krohn-Rhodes]
Given a transparent Moore Machine $M$, one can write its truncated action $M^\trunc$ as a multivariable composition of truncated actions of permutation-reset Moore Machines $M_0, \ldots, M_{n-1}$ for $n$ the number of states of $M$, and a final character-wise map $f$. What's more, this composition takes on a fairly simple form. Let:
\begin{align*}
w_0 &= M_0^\trunc(w),\\
w_1 &= M_1^\trunc(\tuplefy(w,w_0)),\\
w_2 &= M_2^\trunc(\tuplefy(w,w_0,w_1)),\\
&\vdots\\
w_{n-1} &= M_{n-1}^\trunc(\tuplefy(w,w_0, \ldots, w_{n-2})),
\end{align*}
Then:
\[M^\trunc(w) = \cw_f(w_0,\ldots,w_{n-1}).\]
\end{thm}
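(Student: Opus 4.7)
The plan is to prove the theorem by strong induction on $n$, the number of states of $M$. The base case $n=1$ is immediate: $M^\trunc(w)$ is constant, so I can take $M_0$ to be a trivial one-state (vacuously permutation-reset) machine and $f$ to be the constant function with value $q_0$. For the inductive step, given $M$ with $n>1$ states, I aim to construct a permutation-reset Moore Machine $M_0$ and a \emph{residual} Moore Machine $M'$ with at most $n-1$ states, such that the run of $M$ on $w$ is determined character-wise from the runs of $M_0$ on $w$ and of $M'$ on the enriched input $\tuplefy(w, M_0^\trunc(w))$.

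The construction of $M_0$ is the technical heart of the argument. Following Ginzburg \cite{ginzburg}, I would consider the transition monoid $T$ of $M$ (the submonoid of $Q^Q$ generated by the $\delta_a$) and split into two cases. In the \emph{group case}, there exists a non-trivial permutation in $T$ acting on some subset of $Q$, and $M_0$ is built so that its states are the appropriate cosets or orbits, with each input character acting either as a permutation of these classes or as a reset. In the \emph{aperiodic case}, every element of $T$ eventually collapses some pair of states, and $M_0$ can be chosen as a non-trivial quotient of $M$ on which every input induces either the identity or a proper reset.

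In either case, observing $M_0$'s state at each position breaks enough of the ambiguity in $M$'s state that the residual $M'$ can be implemented with strictly fewer than $n$ states. Applying the inductive hypothesis to $M'$ yields the machines $M_1, \ldots, M_{n-1}$ (padded by trivial one-state machines if necessary to reach the count $n-1$), where each $M_i$ takes as input $w$ together with the outputs $w_0, \ldots, w_{i-1}$ of all earlier machines in the cascade. The character-wise map $f$ at the end recovers the state of $M$ at each time step from the tuple $(w_0[i], \ldots, w_{n-1}[i])$, which by construction uniquely determines the $M$-state.

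The main obstacle will be the dichotomy step that produces a permutation-reset $M_0$ together with a genuinely smaller residual; this is exactly where the classical structure theory of finite transformation monoids is invoked, and is where Ginzburg's proof does its real work (via his analysis of divisors that are simple groups or resets). Once that step is in hand, the recursive assembly of the cascade and the verification that $\cw_f(w_0,\ldots,w_{n-1}) = M^\trunc(w)$ reduce to careful bookkeeping with $\tuplefy$ and the observation that each later $M_i$ has strictly richer input than the one before it, so it can extract progressively finer information about the run of $M$.
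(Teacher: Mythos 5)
There is a genuine gap: the construction that your entire induction hinges on --- producing a permutation-reset $M_0$ together with a residual machine on strictly fewer states --- is never actually carried out. You defer it to ``the classical structure theory of finite transformation monoids'' and a group/aperiodic dichotomy, but the two cases as you sketch them do not obviously work. In the aperiodic case you assert that $M_0$ can be taken to be ``a non-trivial quotient of $M$ on which every input induces either the identity or a proper reset''; an aperiodic automaton need not admit any non-trivial congruence of that form, which is exactly why the standard treatments work with covers/divisions rather than quotients. In the group case, the existence of a non-trivial permutation in the transition monoid does not by itself tell you how the non-permutation inputs should act on your cosets or orbits, and arranging for them to act as resets is precisely the hard part you have skipped. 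So as written, the ``main obstacle'' you identify is the proof, and it is missing.

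The paper's argument shows that none of this machinery is needed. Its device is elementary: build a permutation-reset machine $\overline{M}$ on the \emph{same} state set that tracks a state $M$ is \emph{not} in --- if $\delta_a$ is a permutation, $\overline{M}$ applies $\delta_a$; if not, $\overline{M}$ resets to some state outside the image of $\delta_a$ (this is the Permutation-Reset Lemma, a two-line case analysis). Knowing one state $M$ avoids, the remaining uncertainty is captured by a machine $\widehat{M}$ on $|Q|-1$ states that records the index of $M$'s state inside $Q\setminus\{\overline{q}\}$, reading $\tuplefy(w,{\overline{M}}^{\trunc}(w))$, and one recovers $M^{\trunc}(w)$ by a character-wise map $p(\overline{q},i)$. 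This reduces the state count by exactly one per step, which is why the theorem gets exactly $n$ machines with no padding. If you replace your dichotomy step with this ``track a state you are not in'' construction, the rest of your cascade bookkeeping goes through essentially as you describe.
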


Every map in this composition is length-preserving. Of course one can write this as simply one large multivariable composition of a character-wise map and truncated actions of permutation-reset transparent Moore Machines, but this is unwieldy to write down. The above also presents an efficient way of computing the composition, although readers concerned with efficiency are encouraged to look into the Holonomy decomposition \cite{nagy05}.

The proof is inductive: I show that for every transparent Moore Machine $M$, there's another transparent Moore Machine $\overline{M}$ that keeps track of a state $M$ is not in in a permutation-reset way. This reduces the amount of information that needs to be kept track of by one state, and one can keep doing this until one has kept track of all the information to know what state $M$ is in.

The proof in \cite{ginzburg} allows for the possibility that one can keep track of several states the automaton $M$ is not in in a permutation-reset way at the same time, as opposed to one at a time in the proof below.  This is more efficient, but adds needless complexity to the proof.

The key piece of the construction is the Permutation-Reset Lemma, below.

\begin{lem}[Permutation-Reset Lemma]
Given two finite ordered sets of the same size $I,J$, and map between them $f$, there is a map $g: I \to J$ such that:
\begin{itemize}
\item $g$ either acts as:
\begin{itemize}
\item A bijection from $I$ to $J$ (a permutation on the position indices),
\item Or has singleton image (a reset on position indices),
\end{itemize}
\item And for $x, y \in I$, with $x \neq y$,$f(x) \neq g(y)$.
\item For any $x \in I$, $f$ maps elements of $I \setminus \{x\}$ to $J \setminus \{g(x)\}$.
\end{itemize}
\end{lem}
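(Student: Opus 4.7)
My plan is to start by observing that the second and third bulleted conditions say the same thing: both assert that $f(x) \neq g(y)$ whenever $x \neq y$. Rewriting this, it is equivalent to requiring, for every $y \in I$, that $g(y) \in J \setminus f(I \setminus \{y\})$. So the lemma reduces to producing a map $g : I \to J$, either bijective or constant, whose value at each $y$ avoids $f(I \setminus \{y\})$.

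From there I would split on the behaviour of $f$. If $f$ is injective, then because $|I| = |J|$ is finite $f$ is bijective, and $f(I \setminus \{y\}) = J \setminus \{f(y)\}$; the admissibility constraint then forces $g(y) = f(y)$, and $g = f$ is itself a bijection. So this branch handles itself.

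If $f$ is not injective, the same finite-cardinality argument shows $f$ fails to be surjective, so there exists some $c \in J \setminus f(I)$. This $c$ lies outside $f(I \setminus \{y\})$ for every $y$ (since it lies outside all of $f(I)$), so the constant map $g \equiv c$ is admissible and acts as a reset on position indices.

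The main obstacle is really just cosmetic: I expect the entire content of the proof to be the observation that, on finite sets of equal size, non-injectivity is equivalent to non-surjectivity, which delivers the ``free'' element $c$ needed in the non-injective branch. There is no genuine combinatorial choice in either case — the shape of $f$ forces the permutation-or-reset dichotomy directly, so the only care needed is verifying the avoidance condition, which follows immediately from the choice of $c$ (resp.\ from the forced equality $g = f$).
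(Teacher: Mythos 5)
Your proposal is correct and follows essentially the same case split as the paper: take $g = f$ when $f$ is a bijection, and otherwise send everything to an element of $J$ outside the image of $f$ (the paper specifies the \emph{smallest} such element merely to make the choice canonical for the later automaton constructions). Your additional observations—that the second and third conditions are equivalent and that non-injectivity forces non-surjectivity on equal-size finite sets—match the paper's remarks and fill in the routine verification it leaves implicit.
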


\begin{proof}
Suppose $f$ does act as a bijection. Then $g = f$ is a permutation and satisfies the inequality condition.

Suppose $f$ does not act as a bijection. Then $g$ which maps everything to the smallest element of $J$ which is not in the image of $f$ has singleton image and satisfies the inequality condition.

The third condition is just a rephrasing of the second, but will come in handy later on.
\end{proof}

A specific application of the Permutation-Reset Lemma is that one can have a transparent Moore Machine that keeps track of a state the original transparent Moore Machine is not in:

\begin{lem}
Given a transparent Moore Machine $M = (\Sigma, Q, q_0, \delta)$, there is a permutation-reset transparent Moore Machine $\overline{M}$ with the same state set such that on input $w$, the state of $M$ at any one time is not the state of $\overline{M}$.
\end{lem}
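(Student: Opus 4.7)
The plan is to construct $\overline{M}$ directly by applying the Permutation-Reset Lemma character-by-character to the transition maps of $M$, then verify by induction on position that the two machines stay apart.

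First, fix an arbitrary total ordering on $Q$, so that $Q$ can serve as both $I$ and $J$ in the lemma. For each input symbol $a \in \Sigma$, apply the Permutation-Reset Lemma to $f = \delta_a : Q \to Q$ to obtain a map $g_a : Q \to Q$ which is either a permutation of $Q$ or constant, and which satisfies $\delta_a(x) \neq g_a(y)$ whenever $x \neq y$. Assuming $|Q| \geq 2$ (the $|Q|=1$ case is vacuous, since $M$ itself is already a permutation automaton), pick any $q_0' \in Q$ with $q_0' \neq q_0$ and define
\[
\overline{M} = (\Sigma, Q, q_0', \overline{\delta}), \qquad \overline{\delta}(q,a) = g_a(q).
\]
By construction each $g_a$ is a permutation or reset, so $\overline{M}$ is a permutation-reset transparent Moore Machine.

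Next I would verify the separation property by induction on $i$. Let $r$ be the run of $M$ on $w$ and $\overline{r}$ the run of $\overline{M}$ on $w$. The base case $i = 0$ holds because $r[0] = q_0 \neq q_0' = \overline{r}[0]$. For the inductive step, assume $r[i] \neq \overline{r}[i]$; then the second bullet of the Permutation-Reset Lemma, applied to $f = \delta_{w[i]}$ and $g = g_{w[i]}$ with $x = r[i]$ and $y = \overline{r}[i]$, gives
\[
r[i+1] = \delta_{w[i]}(r[i]) \neq g_{w[i]}(\overline{r}[i]) = \overline{r}[i+1].
\]
Iterating the induction over all positions of $w$ finishes the claim.

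The only subtle point is making sure that the Permutation-Reset Lemma really yields the inequality needed at every single step: the lemma is stated with strictly distinct $x,y$, which is exactly why I need the induction hypothesis $r[i] \neq \overline{r}[i]$ before invoking it, and it is also why the choice $q_0' \neq q_0$ in the base case is essential. Beyond that, all the work has been pushed into the lemma, so no further case analysis on whether $\delta_a$ is a bijection is required here.
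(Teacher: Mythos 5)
Your construction is exactly the paper's: in the permutation case the Permutation-Reset Lemma returns $g_a = \delta_a$, and in the reset case it returns the constant map to the smallest element outside the image of $\delta_a$, which is precisely the $\overline{\delta}$ the paper writes down; your induction on position just makes explicit the verification the paper sketches in prose. The proposal is correct, modulo the assumption $|Q| \geq 2$ (which the paper's proof also makes silently, and which holds whenever the lemma is invoked in the Krohn-Rhodes induction).
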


\begin{proof}
Assign a natural ordering to $Q$. Define \[\overline{M} = (\Sigma, Q, \overline{q}_0, \overline{\delta}),\]
Where $\overline{q}_0$ is the smallest element in $Q$ which is not $q_0$, and $\overline{\delta}(q,a)$ is given by:
\begin{itemize}
\item $\delta(q,a)$ if $a$ acts as a permutation.
\item Otherwise, the smallest $q' \in Q$ which is not in the image of any state under the action of $a$.
\end{itemize}
If the action of $a$ was a permutation originally, it is still a permutation in the new automaton. This permutation not only maps the state the automaton is in before reading $a$ to the state afterwards, but also from a state the automaton is not in before reading $a$ to a state the automaton is not in afterwards. In the second case, note that the choice of $q'$ does not depend on $q$, so this action is a reset. Obviously, it maps a state the original automaton is not in before reading $a$ to a state the automaton is not in after reading $a$.
\end{proof}

\begin{lem}
Given transparent Moore Machines $M, \overline{M}$ as above with state sets $Q$, there is a third transparent Moore Machine:
\[\widehat{M} = (\Sigma \times Q, \{0,\ldots,|Q|-2\}, \hat{\imath}_0, \widehat{\delta}),\]
Such that for any input $w$, if $M$ on reading $w$ winds up in state $q$, and $\overline{M}$ on reading $w$ winds up in state $\overline{q}$ then $\widehat{M}$ on reading $\tuplefy(w,{\overline{M}}^{\trunc}(w))$ will wind up in state $\hat{\imath}$, where $q$ is the element in position\footnote{To maintain notational consistency within this paper, where ordered collections are indexed starting with 0, I refer to the first element of a set as being in position 0, and generally the $i+1$st element of a set as being in position $i$. To avoid confusion, I avoid using the notation ``$i$th element'', instead using notation ``the element in position $i$''.
} $\hat{\imath}$ of $Q \setminus \{\overline{q}\}$.
\end{lem}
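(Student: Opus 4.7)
The plan is to define $\widehat{M}$ so that its state encodes $M$'s current state by its position in the list $Q \setminus \{\overline{q}\}$, using the natural ordering on $Q$. Concretely, for each $\overline{q} \in Q$, let $\pi_{\overline{q}} : \{0, \ldots, |Q|-2\} \to Q \setminus \{\overline{q}\}$ be the unique order-preserving bijection. Under the invariant that $M$ is in state $q$ and $\overline{M}$ is in state $\overline{q}$, the state $\hat{\imath}$ of $\widehat{M}$ should equal $\pi_{\overline{q}}^{-1}(q)$; this is well-defined precisely because the previous lemma guarantees $q \neq \overline{q}$.

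To define $\widehat{\delta}(\hat{\imath}, (a, \overline{q}))$, I first recover the putative state of $M$ as $q = \pi_{\overline{q}}(\hat{\imath})$, then compute the successors $q' = \delta(q, a)$ in $M$ and $\overline{q}' = \overline{\delta}(\overline{q}, a)$ in $\overline{M}$, and finally return $\pi_{\overline{q}'}^{-1}(q')$. The previous lemma again supplies $q' \neq \overline{q}'$, so $q' \in Q \setminus \{\overline{q}'\}$ and the resulting position is well-defined. The initial state is set to $\hat{\imath}_0 = \pi_{\overline{q}_0}^{-1}(q_0)$, where $q_0, \overline{q}_0$ are the initial states of $M$ and $\overline{M}$.

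The correctness claim then falls out by induction on $|w|$. The base case is immediate from the choice of $\hat{\imath}_0$. For the inductive step, suppose after reading the first $k$ characters of $w$ that $M$ is in state $q_k$, $\overline{M}$ is in state $\overline{q}_k$, and $\widehat{M}$ is in state $\pi_{\overline{q}_k}^{-1}(q_k)$. The $k$th character of $\tuplefy(w, \overline{M}^\trunc(w))$ is exactly $(w[k], \overline{q}_k)$, since $\overline{M}^\trunc(w)[k]$ is the $k$th state of $\overline{M}$'s run, namely $\overline{q}_k$. Applying $\widehat{\delta}$ then yields $\pi_{\overline{q}_{k+1}}^{-1}(q_{k+1})$, preserving the invariant. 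Taking $k = |w|$ delivers the conclusion.

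The main obstacle is really just keeping the indexing straight between positions in $\{0, \ldots, |Q|-2\}$ and elements of the varying sets $Q \setminus \{\overline{q}\}$; all of the genuinely substantive work has already been carried out in the Permutation-Reset Lemma and its application to build $\overline{M}$, which together guarantee the nonequality $q \neq \overline{q}$ that keeps our bijections $\pi_{\overline{q}}$ from going out of bounds.
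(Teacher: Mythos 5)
Your construction is exactly the paper's: recover $q$ as the element in position $\hat{\imath}$ of $Q \setminus \{\overline{q}\}$, step both $\delta$ and $\overline{\delta}$ forward, and re-index against $Q \setminus \{\overline{q}'\}$, with the same choice of initial state. The paper simply asserts the construction works, whereas you spell out the induction and the well-definedness point (that $\delta(q,a) \neq \overline{\delta}(\overline{q},a)$, which ultimately traces back to the third condition of the Permutation-Reset Lemma); this is a correct elaboration of the same argument, not a different one.
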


\begin{proof}
Let $\widehat{M} =  (\Sigma \times Q, \{0,\ldots,|Q|-2\}, \hat{\imath}_0, \widehat{\delta})$ where $\hat{\imath}_0$ is the index of $q_0$ in $Q \setminus \{\overline{q}_0\}$ and $\widehat{\delta}(i,(a,\overline{q}))$ is computed by:
\begin{itemize}
\item Computing $q' = \overline{\delta}(\overline{q},a)$. This is the state $\overline{M}$ says that $M$ is not in after reading $a$.
\item Computing $q$, the $i$th element of $Q \setminus \{\overline{q}\}$. This is the state of $M$ prior to reading $a$.
\item Return $j$, the index of $\delta(q,a)$ in $Q \setminus q'$.
\end{itemize}

The above construction is designed specifically to satisfy the conclusion.
\end{proof}

As such: \[M^{\trunc}(w) = \cw_p({\overline{M}}^{\trunc}(w),{\widehat{M}}^{\trunc}(\tuplefy(w,{\overline{M}}^{\trunc}(w))))\]
Where $p(\overline{q},i)$ is the $i$th element of $Q \setminus \{\overline{q}\}$.

Finally, I prove the Krohn-Rhodes Theorem:
\begin{proof}
By induction on the number of states of $M$.

\textbf{Base Case:} If $M$ has one state, then $f$ in the composition is 0-ary, and one can have it just output that constant state.

\textbf{Inductive Case:} Given $M$, one can write: 
\[M^{\trunc}(w) = \cw_p({\overline{M}}^{\trunc}(w),{\widehat{M}}^{\trunc}(\tuplefy(w,{\overline{M}}^{\trunc}(w)))),\]
With $\overline{M}$ permutation-reset. By the inductive hypothesis, one can write ${\widehat{M}}^{\trunc}(w)$, which has one fewer state than $M$, as:
\[{\widehat{M}}^\trunc(w) = \cw_f(w_1,\ldots,w_{n-1}),\]
Where:
\begin{align*}
w_1 &= M_1^\trunc(w)\\
w_2 &= M_2^\trunc(\tuplefy(w,w_1)))\\
w_3 &= M_3^\trunc(\tuplefy(w,w_1,w_2)\\
&\vdots\\
w_{n-1} &= M_{n-1}^\trunc(\tuplefy(w,w_1, \ldots, w_{n-2}))
\end{align*}

As such, ${\widehat{M}}^{\trunc}(\tuplefy(w,{\widehat{M}}^{\trunc}(w)))$ is given by just plugging in:

\[{\widehat{M}}^{\trunc}(\tuplefy(w,{\overline{M}}^{\trunc}(w))) = \cw_f(w_1,\ldots,w_{n-1}),\]
Where:
\begin{align*}
w_1 &= M_1^\trunc(\tuplefy(w,{\overline{M}}^{\trunc}(w))),\\
w_2 &= M_2^\trunc(\tuplefy(\tuplefy(w,{\overline{M}}^{\trunc}(w)),w_1)),\\
w_3 &= M_3^\trunc(\tuplefy(\tuplefy(w,{\overline{M}}^{\trunc}(w)),w_1,w_2)),\\
&\vdots\\
w_{n-1} &= M_{n-1}^\trunc(\tuplefy(\tuplefy(w,{\overline{M}}^{\trunc}(w)), w_1, \ldots, w_{n-2})).
\end{align*}

Alternately, fiddling with some parentheses in the definitions in the automata:
\[{\widehat{M}}^{\trunc}(\tuplefy(w,{\overline{M}}^{\trunc}(w))) = \cw_f(w_1,\ldots,w_{n-1},)\]
Where:
\begin{align*}
w_0 &= {\overline{M}}^{\trunc}(w),\\
w_1 &= M_1^\trunc(\tuplefy(w,w_0)),\\
w_2 &= M_2^\trunc(\tuplefy(w,w_0,w_1)),\\
w_3 &= M_3^\trunc(\tuplefy(w,w_0,w_1,w_2)),\\
&\vdots\\
w_{n-1} &= M_{n-1}^\trunc(\tuplefy(w,w_0, \ldots, w_{n-2})).
\end{align*}
In which case:
\[M^{\trunc}(w) = \cw_p(\underbrace{{\overline{M}}^{\trunc}(w)}_{w_0},\cw_f(w_1,\ldots,w_{n-1})).\]
One can combine $p$ and $f$ to get a single character-wise function on $w_0,\ldots, w_{n-1}$, thus completing the induction.
\end{proof}

The proof is still straightforward if one unwinds the induction. In the construction, $w_0$ is keeping track of a state $M$ is not in, but it may as well be keeping track of an index for a state $M$ is not in. $w_1$ is keeping track of an index of a state $M$ is not in once one has removed the state in position $w_0$ from $Q$. $w_2$ is keeping track of an index of a state $M$ is not in once one has removed the states that $w_1$ and $w_2$ are keeping track of from $Q$. And so forth. One can formalize this indexed removal process as follows:

\begin{defi}
Given a positive integer $n$, an \emph{ordinal removal sequence} for $n$ is a (possibly empty) sequence of positive integers $(k_0,\ldots,k_i)$ satisfying:
\[ i < n,\]
\[0 \leq k_j < n-j.\]
\end{defi}

One can interpret an ordinal removal sequence for $n$ as a series of commands operating on an ordered set of size $n$ of the form ``remove the $i+1$st smallest element remaining.'' Note that as elements are removed, there are fewer elements remaining, hence the decreasing upper limit on $k_j$ in the second constraint. Consistent with the rest of this paper, I begin indexing with 0, so a 0 means remove the smallest element.

\begin{defi}
Given an ordered set $L$, and an ordinal removal sequence for $|L|, \mathbf{k} = (k_0,\ldots,k_{i-1})$, define $\remove(L, \mathbf{k})$ recursively:
\begin{itemize}
\item $\remove(L,()) = L$,
\item $\remove(L,(k_0,\ldots,k_{i-1}))$ is given removing the element in position $k_i$ (with order inherited from $L$, and 0 means remove the smallest element) of $\remove(L,(k_0,\ldots,k_{i-2}))$.
\end{itemize}
Let $\ors_{k,n}$ denote the set of ordinal removal sequences for $n$ of length $k$.
\end{defi}

Recall that ${}^\frown$ is used for concatenation, so: \[(k_0,\ldots,k_{i-1})^{\frown}j = (k_1,\ldots,k_{i-1},j).\]

\begin{exa}
Consider the ordinal removal sequence for 5: $(0,1,2,1)$ acting on the ordered set $(A,B,C,D,E)$:
\begin{center}
\begin{tabular}{|c|l|}
\hline
Start: & $(A,B,C,D,E)$\\
Remove at position 0: & $(B,C,D,E)$\\
Remove at position 1: & $(B, D, E)$\\
Remove at position 2: & $(B, D)$\\
Remove at position 1: & $(B)$\\
\hline
\end{tabular}
\end{center}
\end{exa}

The following lemmat formally gives the construction of the $M_j$ in the Krohn-Rhodes decomposition.

\begin{lem}
Given a transparent Moore Machine $M = (\Sigma, Q, q_0, \delta)$, and $0 \leq j < |Q|-1$, there is a permutation-reset transparent Moore Machine \[M_j = (\Sigma \times \ors_{j,|Q|}, \{0,\ldots,|Q|-j\}, k_0, \delta'),\]
Such that if $\kappa$ is a word of ordinal removal sequences for $|Q|$ of length $j$, that is, $\kappa \in (\ors_{j,|Q|})^*$, satisfying:
\begin{itemize}
\item At any point $i$, $\kappa[i]$ does not remove $M(w)[i]$ from $Q$. That is, $M(w)[i] \in \remove(Q,\kappa[i])$.
\item $\kappa[i+1]$ is determined by $\kappa[i]$ and $w[i]$, specifically such that:
\item The map $\delta_a$ (from $M$) maps states in $\remove(Q,\kappa[i])$ to states in $\remove(Q,\kappa[i+1])$.
\end{itemize}
Then $o = M_j(\tuplefy(w,\kappa))$ satisfies:
\begin{itemize}
\item At any point $i$, $\kappa[i] {}^{\frown} o[i]$ does not remove $M(w)[i]$ from $Q$. That is, $M(w)[i]$ is not in position $o[i]$ of $\remove(Q,\kappa[i])$, and in particular $M(w)[i] \in \remove(Q,\kappa[i]{}^{\frown}o[i])$.
\end{itemize}
\end{lem}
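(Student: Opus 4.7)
The plan is to construct $M_j$ by imitating the construction of $\overline{M}$ from the base case lemma, but relativized to the shrinking ordered set $\remove(Q,\kappa[i])$ rather than to all of $Q$. Fix the natural ordering on $Q$ once and for all; then for each $\mathbf{k}\in\ors_{j,|Q|}$, the subset $\remove(Q,\mathbf{k})$ inherits an ordering and has size $|Q|-j$. The state of $M_j$ at step $i$ is to be interpreted as the position within $\remove(Q,\kappa[i])$ of a distinguished element that is guaranteed to differ from $M(w)[i]$.

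To define $\delta'$, for each input symbol $(a,\mathbf{k})\in\Sigma\times\ors_{j,|Q|}$ I would first compute the unique $\mathbf{k}'\in\ors_{j,|Q|}$ such that $\delta_a$ carries $\remove(Q,\mathbf{k})$ into $\remove(Q,\mathbf{k}')$, when such a $\mathbf{k}'$ exists; the hypothesis on $\kappa$ guarantees one exists whenever the input is actually seen in a legitimate run, and on garbage inputs any default may be used. Then I would invoke the Permutation-Reset Lemma with $I=\remove(Q,\mathbf{k})$, $J=\remove(Q,\mathbf{k}')$, and $f=\delta_a|_I$, obtaining a map $g=g_{a,\mathbf{k}}\colon I\to J$ which is either a bijection or has singleton image and which satisfies $f(I\setminus\{y\})\subseteq J\setminus\{g(y)\}$ for every $y\in I$. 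The transition $\delta'(i,(a,\mathbf{k}))$ is then the position in $J$ of $g(y)$, where $y$ is the element of $I$ at position $i$. Because $g$ is a bijection or has singleton image, the induced map on position indices is itself either a permutation or a reset, so $M_j$ is permutation-reset.

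Set $k_0$ to be the position of $q_0$ in $\remove(Q,\kappa[0])$, where $\kappa[0]$ is the fixed initial removal sequence determined by the initial states of the earlier machines $M_0,\dots,M_{j-1}$. The correctness claim then follows by induction on $i$. The base case is immediate from the definition of $k_0$. For the inductive step, let $y$ be the element at position $o[i]$ of $\remove(Q,\kappa[i])$, which by the inductive hypothesis is not $M(w)[i]$; by construction $o[i+1]$ is the position in $\remove(Q,\kappa[i+1])$ of $g(y)$; and by the Permutation-Reset Lemma, $M(w)[i+1]=\delta_{w[i]}(M(w)[i])\in\remove(Q,\kappa[i+1])\setminus\{g(y)\}$, so $M(w)[i+1]$ is not at position $o[i+1]$.

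The main obstacle, I expect, is bookkeeping rather than anything conceptual: one must keep straight the translation between abstract states in $Q$ and their indices in the dynamically shrinking set $\remove(Q,\kappa[i])$, and one must verify that $(a,\mathbf{k})\mapsto\mathbf{k}'$ is well-defined under the hypothesis. The Permutation-Reset Lemma does all of the substantive work, simultaneously delivering the permutation-reset property of $M_j$ and the non-equality invariant that makes the induction go through.
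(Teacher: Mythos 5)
Your construction is essentially the paper's: relativize the Permutation-Reset Lemma to $I=\remove(Q,\kappa[i])$ and $J=\remove(Q,\kappa[i+1])$ with $f=\delta_a|_I$, let the state of $M_j$ be a position index tracking an element guaranteed to differ from $M(w)[i]$, and run the induction off the lemma's third condition. Your inductive step is correct, and your remark that garbage inputs $(a,\mathbf{k})$ can be handled by a default matches the paper's implicit treatment. (One small caution: the $\mathbf{k}'$ with $\delta_a(\remove(Q,\mathbf{k}))\subseteq\remove(Q,\mathbf{k}')$ need not be unique when $\delta_a$ is not injective on $\remove(Q,\mathbf{k})$; what you actually need, and what the lemma's hypothesis supplies, is that $\kappa[i+1]$ is a fixed \emph{function} of $\kappa[i]$ and $w[i]$ determined by the canonical constructions of $M_0,\dots,M_{j-1}$, and $\delta'$ should be defined using that function.)

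There is, however, one step that fails as written: the initial state. You set $k_0$ to be the position of $q_0$ in $\remove(Q,\kappa[0])$ and call the base case immediate. But the invariant you must establish is that $M(w)[0]=q_0$ is \emph{not} at position $o[0]=k_0$ of $\remove(Q,\kappa[0])$; your choice places it exactly there, so the base case is not merely unproved but false under your definition, and the induction never gets started. The paper takes $k_0$ to be the smallest index of $\{0,\dots,|Q|-j-1\}$ that is \emph{not} the position of $q_0$ in $\remove(Q,\kappa[0])$ (such an index exists since $j<|Q|-1$ guarantees $|\remove(Q,\kappa[0])|\geq 2$). With that single correction the rest of your argument goes through unchanged.
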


\begin{proof}
The idea here is that each $M_j$ should keep track of a single entry in an ordinal removal sequence that will remove all elements of $Q$ except the state of the original automaton $M$ at any one particular time. These will be the $M_j$ in the multivariate composition, so they will be reading in both the original input (a single character $a$ from $w$), and a single index from each of $M_0,\ldots,M_{j-1}$, together forming an ordinal removal sequence ($\kappa[i]$) of length $j$. Each $M_j$ then keeps track of an index, which, when added on to the end of the ordinal removal sequence does not remove the one state that must not be removed, the state of $M$ at that point.

For reference, a picture of the situation is drawn below:
\begin{center} \includegraphics{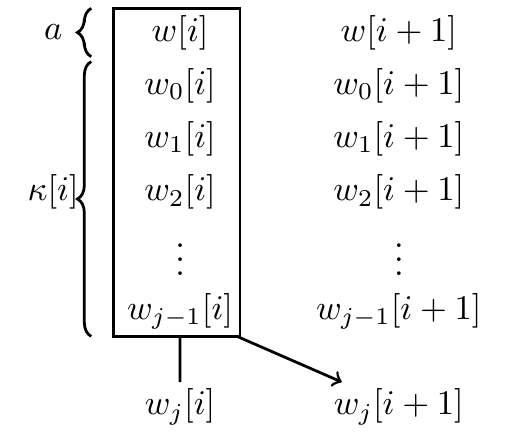} \end{center}

As one can see in the diagram, $M_j$ will be reading in $w[i]$, the character that takes the automaton $M$ from $M(w)[i]$ to $M(w)[i+1]$, and $\kappa[i]$, the ordinal removal sequence within which its current state $w_j[i]$ is interpreted. Specifically, $w_j[i]$ will be a position in $\remove(Q,\kappa[i])$ where there isn't the current state of $M$, $M(w)[i]$.
This will transition $M_j$ into the state $w_j[i+1]$, which must be a position in $\remove(Q,\kappa[i+1])$ where there isn't the next state of $M$, $M(w)[i]$.

Note that $M_j$ does not get direct access to $\kappa[i+1]$, but of course it needs access to $\kappa[i+1]$ in order to determine the index for $M(w)[i+1]$ in $\remove(Q,\kappa[i+1])$ so that it can avoid it. Fortunately, if the previous automata, $M_0,\ldots,M_{j-1}$ work in canonical fashions, knowing $a$ and $\kappa[i]$ is enough to determine $\kappa[i+1]$.

To start with, one needs to pick the starting state ($w_j[0]$) for $M_j$, $k_0$, such that $k_0$ is not an index for the start state of $M$ in $\remove(Q,\kappa[0])$. Let it be the smallest such index.

By hypothesis, the transition map induced by the character $a$ on the automaton $M$, $\delta_a$, maps states in $\remove(Q,\kappa[i])$ to states in $\remove(Q,\kappa[i+1])$. By the Permutation-Reset Lemma, one can define the transition map for $M_j$, $\delta'$, with $\delta'_{(w[i],\kappa[i])}$
a permutation-reset map for any particular $w[i]$ and $\kappa[i]$ that does the avoiding required of it.
\end{proof}
Finally, it's worth noting that $\kappa[i]^{\frown}w_j[i]$, $\kappa[i+1]^{\frown}w_j[i+1]$ satisfy the requirements on $\kappa[i]$ and $\kappa[i+1]$ in the hypothesis of this lemma. Specifically:
\begin{itemize}
\item The new $w_j[0]$ and $w_j[i+1]$ were chosen to avoid removing $M(w)[0]$ and $M(w)[i+1]$ from $Q$.
\item $w_j[i+1]$ is determined by $w_j[i], \kappa[i]$, and $w[i]$.
\item The map $\delta_a$ maps states in $\remove(Q,\kappa[i]^{\frown}w_j[i])$ to states in $\remove(Q,\kappa[i+1]^{\frown}w_j[i+1])$. This is immediate, looking at the third condition on the function the Permutation-Reset Lemma constructs.
\end{itemize}

As such the $M_j$ in this proof are the same as the $M_j$ in the multivariate composition for the action of $M$, and $\tuplefy(w_1,\ldots,w_{j-1})$ is a word of ordinal removal sequences $\kappa$ as above for each suitable $j$. As such the final character-wise map $f$ is simply the map mapping an ordinal removal sequence $\mathbf{k}$ of length $|Q|-1$ to the single element of $\remove(Q,\mathbf{k})$.

Hopefully, this particular proof will shed some light on the multivariate composition used to determine $M^{\trunc}(w)$: why it is shaped the way it is shaped, and what each piece of the composition is keeping track of. Utilizing the Krohn-Rhodes Theorem, one currently has the following set of generators for the regular functions:

\begin{thm}
Any regular function can be written as a multivariable composition of:
\begin{itemize}
\item Truncated Permutation-Reset Transparent Moore Machine maps,
\item Truncated Reverse Moore Machine maps,
\item Character-wise maps,
\item $\tuplefy$ (allowing one to generate multiary character-wise maps),
\item $S_a$ for various $a$,
\item $\unpad$.
\end{itemize}
\end{thm}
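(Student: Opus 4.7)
The plan is to combine the decomposition theorem at the end of Section 3 with the Krohn-Rhodes Theorem just proved, in order to replace each truncated Moore Machine factor by a cascade of truncated permutation-reset transparent Moore Machine factors. The other generators in the earlier theorem (truncated Reverse Moore Machines, character-wise maps, $\tuplefy$, $S_a$, and $\unpad$) already appear in the target generating set and do not need to be touched.

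First I would take the decomposition from the earlier theorem and focus on a single truncated Moore Machine factor $M^\trunc$, where $M = (\Sigma, Q, q_0, \Gamma, \delta, \epsilon)$. As noted immediately after the definition of $\cw_f$ in Section 2, every Moore Machine action factors through the corresponding transparent Moore Machine: letting $M' = (\Sigma, Q, q_0, \delta)$ be the transparent version obtained by forgetting $\Gamma$ and $\epsilon$, one has $M^\trunc(w) = \cw_\epsilon(M'^\trunc(w))$. This reduces the problem to expressing each $M'^\trunc$ using only generators from the target list.

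Next I would invoke the Krohn-Rhodes Theorem on $M'^\trunc$, which writes it as a cascade $w_0, w_1, \ldots, w_{n-1}$ of truncated permutation-reset transparent Moore Machine maps followed by a single character-wise map $f$, in exactly the form spelled out in the statement of that theorem. Composing this outer $\cw_f$ with the $\cw_\epsilon$ inherited from the previous paragraph still yields a single character-wise map, so $M^\trunc$ has now been rewritten as a multivariable composition of the desired form.

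Substituting these local rewritings back into the global decomposition from Section 3 finishes the proof. The main obstacle is purely bookkeeping: the truncated Moore Machines appearing in the earlier decomposition can be multivariable, and plugging a Krohn-Rhodes cascade into a multivariable composition requires threading inputs through $\tuplefy$ and aligning arities of the character-wise maps correctly. No new ideas are needed beyond this routine substitution, and no new type of generator is introduced, so the resulting expression lies entirely within the advertised list.
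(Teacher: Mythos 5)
Your proposal is correct and follows essentially the same route as the paper, which states this theorem as an immediate corollary of the Section 3 generating-set theorem combined with the Krohn-Rhodes decomposition of each truncated Moore Machine factor. Your explicit factoring of a general $M^\trunc$ through its transparent version via $\cw_\epsilon$ is a detail the paper leaves implicit (it notes the fact after the definition of $\cw_f$ but does not repeat it here), and your handling of the bookkeeping matches the paper's intent.
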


It may seem like this hasn't gained much, but the next section will show that there isn't actually that much to Permutation-Reset Transparent Moore Machines. The section after will handle the reverse Moore Machine case.

\section{A Further Breaking Down}%%%%%%%%%%%%%%

In this section I prove that truncated permutation-reset transparent Moore Machine maps can be written as the composition of a single truncated permutation transparent Moore Machine map and a single truncated reset transparent Moore Machine map. Then I show that permutation transparent Moore Machines and reset transparent Moore Machines are actually quite familiar objects. As before, these proofs are adapted from \cite{ginzburg}, which uses vastly different notation.

\begin{lem}
Given a permutation-reset transparent Moore Machine $M = (\Sigma, Q, q_0, \delta)$, there is a permutation transparent Moore Machine $\tilde{M}$, reset transparent Moore Machine $\vec{M}$, and function $F$ such that: 
\[M^{\trunc}(w) = \cw_F(\vec{M}^{\trunc}(\tuplefy(w,\tilde{M}^{\trunc}(w))),\tilde{M}^{\trunc}(w))\]
\end{lem}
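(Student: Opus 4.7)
The plan is to track permutations and resets separately. I would let $\tilde M$ accumulate only the permutations induced by permutation characters (treating reset characters as the identity), while $\vec M$ tracks a ``base state'' that permutation characters leave alone and that reset characters overwrite, using $\tilde M$'s current state as a correction factor. Applying $\tilde M$'s accumulated permutation to $\vec M$'s base state should then recover the true state of $M$.

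Concretely, partition $\Sigma = \Sigma_p \sqcup \Sigma_r$, where $a \in \Sigma_p$ when $\delta_a$ is a permutation $\pi_a$ of $Q$ (identity included) and $a \in \Sigma_r$ when $\delta_a$ is a non-identity constant map with image $\{q_a\}$. Let $G \leq S_Q$ be the finite subgroup generated by $\{\pi_a : a \in \Sigma_p\}$. I would then define
\[
\tilde M = (\Sigma, G, \mathrm{id}, \tilde\delta), \qquad \tilde\delta(g,a) = \begin{cases} \pi_a \circ g & a \in \Sigma_p, \\ g & a \in \Sigma_r, \end{cases}
\]
and
\[
\vec M = (\Sigma \times G, Q, q_0, \vec\delta), \qquad \vec\delta(\vec q,(a,g)) = \begin{cases} \vec q & a \in \Sigma_p, \\ g^{-1}(q_a) & a \in \Sigma_r, \end{cases}
\]
together with $F \colon Q \times G \to Q$ given by $F(\vec q, g) = g(\vec q)$. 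Each transition of $\tilde M$ is left multiplication in $G$, hence a permutation of $G$, so $\tilde M$ is a permutation transparent Moore Machine. For any fixed input $(a,g)$, the transition of $\vec M$ is either the identity or the constant map $\vec q \mapsto g^{-1}(q_a)$, so $\vec M$ is a reset transparent Moore Machine.

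Writing $g_i = \tilde M^\trunc(w)[i]$, $\vec q(i) = \vec M^\trunc(\tuplefy(w, \tilde M^\trunc(w)))[i]$, and $q_M(i) = M^\trunc(w)[i]$, the claim reduces to the identity $q_M(i) = F(\vec q(i), g_i) = g_i(\vec q(i))$. This follows by a short induction on $i$: the base case is $g_0 = \mathrm{id}$ and $\vec q(0) = q_0 = q_M(0)$; the permutation step uses $g_{i+1} = \pi_{w[i]} \circ g_i$ and $\vec q(i+1) = \vec q(i)$; and the reset step uses $g_{i+1} = g_i$ together with $\vec q(i+1) = g_i^{-1}(q_{w[i]})$, so that $g_{i+1}(\vec q(i+1)) = q_{w[i]} = q_M(i+1)$.

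The single delicate point driving the construction is ensuring that $\vec M$ is genuinely a reset automaton: this is what forces the reset target $g^{-1}(q_a)$ to be supplied through $\vec M$'s input (via the $\tilde M$-component of $\tuplefy(w, \tilde M^\trunc(w))$) rather than read from $\vec M$'s current state, and hence why $\vec M$'s alphabet must be expanded to $\Sigma \times G$ and $\tilde M^\trunc(w)$ fed in alongside $w$. Once one also uses the Moore Machine convention that $\tilde M^\trunc(w)[i]$ is the state of $\tilde M$ \emph{before} reading $w[i]$ --- precisely the $g_i$ that $\vec M$ needs to invert in the reset case --- everything else is bookkeeping.
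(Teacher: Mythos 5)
Your construction is essentially identical to the paper's: the paper also lets $\tilde{M}$ accumulate permutations by left composition (with states all of $S_Q$ rather than your generated subgroup $G$, an immaterial difference), has $\vec{M}$ store $f^{-1}(q_a)$ on resets and idle on permutations, and recovers the state via $F(f,q)=f(q)$. Your explicit three-case induction just fills in the verification the paper dismisses as ``easy to verify in terms of their transition relations,'' so the proposal is correct and matches the paper's approach.
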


\begin{proof}
Let $\tilde{M} = (\Sigma, S_Q, id, \tilde{\delta})$ and $\vec{M} = (\Sigma \times S_Q, Q, q_0, \vec{\delta})$, where $S_Q$ is the set of all permutations on $Q$, and:

If $\delta_a$ is a permutation of the states of $M$:
\[\tilde{\delta}(f,a) = \delta_a \circ f,\]
\[\vec{\delta}(q,(a,f)) = q.\]

If $\delta_a$ is a reset on the states of $M$ with image $\{q_a\}$:
\[\tilde{\delta}(f,a) = f,\]
\[\vec{\delta}(q,(a,f)) = f^{-1}(q_a).\]

As desired $\tilde{M}$ is a permutation automaton and $\vec{M}$ is a reset automaton (notice that the identity action is necessary in case $\delta_a$ is a permutation). 

Let $F: S_Q \times Q \to Q$ with $F(f,q) = f(q)$. I now claim that 
\[M^{\trunc}(w) = \cw_F(\vec{M}^{\trunc}(\tuplefy(w,\tilde{M}^{\trunc}(w))),\tilde{M}^{\trunc}(w)),\]
As desired. This is easy to verify in terms of their transition relations. The intuition behind this construction is that $\tilde{M}$ keeps track of the action of each of the permutations and $\vec{M}$ handles resets by storing them in terms of what state one would have to start in such that after being acted on by just the permutations one winds up in the current state of $M$.
\end{proof}

I now define a couple of transparent Moore Machines in order to refine the decomposition further.
\begin{defi}
For each $n$, define the \emph{Accumulator on $S_n$} transparent Moore Machine $AS_n$:
\[AS_n = (S_n,S_n,id,\delta),\]
Where:
\[\delta_g(h) = h \cdot g,\]
Where $S_n$ is the symmetric group on $n$ elements with composition operation $(h \cdot g)(i) = h(g(i))$.
\end{defi}

\begin{defi}
Define the \emph{bit-storage} automaton:
\[Bit = (\{-,0,1\}, \{0,1\}, 0, \delta),\]
Where $\delta_{-}$ acts as the identity, $\delta_0$ is a reset to state $0$, and $\delta_1$ is a reset to state $1$.
\end{defi}

Since every collection of permutations can be viewed as a subset of the symmetric group $S_n$ for some $n$:
\begin{prop}
Every truncated permutation transparent Moore Machine map $M^{\trunc}$ can be written as: 
\[M^{\trunc}(w) = \cw_f(AS_n^{\trunc}(\cw_g(w))),\]
For some functions $f$ and $g$.
\end{prop}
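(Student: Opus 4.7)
The plan is the following. Let $M = (\Sigma, Q, q_0, \delta)$ be a permutation transparent Moore Machine with $n = |Q|$, and fix a bijection $Q \cong \{0, 1, \ldots, n-1\}$ so that each $\delta_a$ may be regarded as an element of $S_n$. I will choose $g : \Sigma \to S_n$ so that $\cw_g$ turns the input into a stream of permutations, let $AS_n^\trunc$ accumulate those permutations, and use a character-wise map $f : S_n \to Q$ to read the corresponding state of $M$ off of the accumulated permutation. The naive choice $g(a) = \delta_a$ is what one would try first.

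The one subtle point, and the main obstacle, is the direction of composition. By the definition of $AS_n$, after reading $g_0, g_1, \ldots, g_{i-1}$ the state is $g_0 \cdot g_1 \cdots g_{i-1}$, which, because $(h \cdot g)(x) = h(g(x))$, is the permutation sending $x$ to $g_0(g_1(\cdots g_{i-1}(x) \cdots))$. On the other hand, $M^\trunc(w)[i]$ is $\delta_{w[i-1]}(\cdots \delta_{w[1]}(\delta_{w[0]}(q_0)) \cdots)$, in which the indices appear in the \emph{opposite} order. So I cannot recover $M^\trunc(w)[i]$ from $\delta_{w[0]} \cdot \delta_{w[1]} \cdots \delta_{w[i-1]}$ just by evaluating at $q_0$.

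The fix is to invert. Set $g(a) = \delta_a^{-1}$, which makes sense because $M$ is a permutation automaton, and set $f(\pi) = \pi^{-1}(q_0)$. A routine induction on $i$ then shows that
\[ AS_n^\trunc(\cw_g(w))[i] = \delta_{w[0]}^{-1} \cdot \delta_{w[1]}^{-1} \cdots \delta_{w[i-1]}^{-1}, \]
whose inverse, as a function, sends $q_0$ to $\delta_{w[i-1]}(\cdots \delta_{w[0]}(q_0) \cdots) = M^\trunc(w)[i]$; the base case $i = 0$ works since $AS_n$ starts in the identity and $f(id) = q_0$. Applying $\cw_f$ character-wise to $AS_n^\trunc(\cw_g(w))$ therefore gives exactly $M^\trunc(w)$. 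Once the inverse trick is in place the remainder of the argument is just unwinding definitions, so the bulk of the work is in seeing why one must invert rather than compose directly.
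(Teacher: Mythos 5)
Your proof is correct, and it supplies the one detail the paper glosses over entirely (the proposition is stated there without proof, as an immediate consequence of viewing the $\delta_a$ as elements of $S_n$): with the paper's conventions $\delta_g(h) = h \cdot g$ and $(h \cdot g)(x) = h(g(x))$, the accumulator builds the product $\delta_{w[0]} \cdots \delta_{w[i-1]}$ in the order opposite to the one in which $M$ applies its transitions, so the naive $g(a) = \delta_a$, $f(\pi) = \pi(q_0)$ genuinely fails, and your choice $g(a) = \delta_a^{-1}$, $f(\pi) = \pi^{-1}(q_0)$ is exactly the right repair. The induction you sketch verifies the claim character by character, including the $i=0$ base case, so nothing is missing.
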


What's more:
\begin{prop}
Every truncated reset transparent Moore Machine map $M^{\trunc}$ can be written as:
\[\cw_f(Bit^{\trunc}(\cw_{g_0}(w)), \ldots, Bit^{\trunc}(\cw_{g_{n-1}}(w))),\]
For suitable $n$, $f$, and $g_0,\ldots,g_{n-1}$.
\end{prop}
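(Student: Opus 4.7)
The plan is to use a unary encoding: one copy of $Bit$ per state of $M$, with the $i$-th copy tracking the predicate `the most recent reset in the input so far, if any, was to $q_i$'. At most one such predicate can hold at a time, so a character-wise decoder will be able to recover the current state of $M$ at every position.

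Concretely, I will let $M = (\Sigma, Q, q_0, \delta)$ be a reset transparent Moore Machine, enumerate $Q = \{q_0, q_1, \ldots, q_{n-1}\}$ with $q_0$ first, and define $g_i: \Sigma \to \{-, 0, 1\}$ by sending $a$ to $-$ when $\delta_a$ is the identity, to $1$ when $\delta_a$ is the reset to $q_i$, and to $0$ when $\delta_a$ is a reset to some $q_j$ with $j \neq i$. Since $-$ acts as the identity on $Bit$ while $0$ and $1$ act as resets, an immediate unwinding of the $Bit$ dynamics will show that $Bit^{\trunc}(\cw_{g_i}(w))[j]$ equals $1$ exactly when the most recent non-identity transition induced by $w[0], \ldots, w[j-1]$ is a reset targeting $q_i$, and equals $0$ otherwise. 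In particular, at every position $j$ at most one of the $n$ bit traces will read $1$.

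Next I will define $f: \{0,1\}^n \to Q$ by $f(b_0, \ldots, b_{n-1}) = q_i$ whenever $b_i = 1$, with the all-zero tuple decoded to $q_0$. A brief case split on whether any reset occurs in $w[0], \ldots, w[j-1]$, together with the previous observation, will confirm that $\cw_f$ applied character-wise to the $n$ bit traces produces $M^{\trunc}(w)$. Length preservation throughout is automatic since each $\cw_{g_i}$ and each $Bit^{\trunc}$ is length preserving.

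The only mild obstacle is that every $Bit$ machine begins in state $0$, which forces the all-zeros output pattern to decode to a single fixed state of $M$. I accommodate this by listing $q_0$ first in the enumeration: the two logically distinct cases `no reset has yet been read' and `the last reset happened to be to $q_0$' then both correctly decode to $q_0$, so no separate initialization gadget is needed.
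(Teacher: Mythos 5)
Your proof is correct and follows essentially the same approach as the paper: feed each copy of $Bit$ a character-wise image of the input (identity transitions sent to $-$, resets sent to bit values) and arrange the encoding so that the all-zeros configuration of the $Bit$ machines decodes to the start state $q_0$. The only difference is the choice of encoding: the paper uses a binary encoding with $n = \lceil \log_2 |Q| \rceil$ copies of $Bit$, whereas you use a one-hot encoding with $n = |Q|$ copies; both are valid, yours being slightly less economical but arguably more transparent.
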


\begin{proof}
Suppose $M = (\Sigma,Q,q_0,\delta)$. Choose an $n$ such that $2^n \geq |Q|$. For every state $q \in Q$, associate a unique bitstring $b(q) \in 2^n$, where $b_k(q)$ is the bit in position $k$ of $b(q)$, such that the start state $q_0 \in Q$ is given by the all 0s bitstring. Let $f = b^{-1}.$ Suppose $\delta_a$ acts as a reset to the state $q_a$. Then let $g_k(a) = b_k(q_a)$.
\end{proof}

As such:
\begin{prop}
Every truncated Moore Machine map $M^{\trunc}$ can be written as a multivariable composition of:
\begin{itemize}
\item $AS_n^{\trunc}$ for various $n$,
\item $Bit^{\trunc}$,
\item Character-wise maps.
\end{itemize}
\end{prop}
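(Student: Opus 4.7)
The plan is to chain together the structural results established earlier in the paper, peeling off one layer of generality at each step. First, I would reduce to the transparent case: for a general Moore Machine $M = (\Sigma,Q,q_0,\Gamma,\delta,\epsilon)$, one has $M^\trunc(w) = \cw_\epsilon(M_0^\trunc(w))$, where $M_0 = (\Sigma,Q,q_0,\delta)$ is the transparent Moore Machine obtained by dropping $\epsilon$. This absorbs the output map into a final character-wise layer and reduces the statement to the transparent case.

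Next, I would apply the Krohn-Rhodes Theorem to $M_0$, rewriting $M_0^\trunc$ as a multivariable cascade built from truncated permutation-reset transparent Moore Machine maps $N_0^\trunc,\ldots,N_{|Q|-1}^\trunc$ together with a final character-wise map. Then I would apply the permutation/reset splitting lemma to each $N_j$, replacing each $N_j^\trunc(u)$ by the expression $\cw_{F_j}(\vec{N_j}^\trunc(\tuplefy(u,\tilde{N_j}^\trunc(u))),\tilde{N_j}^\trunc(u))$, where $\tilde{N_j}$ is a permutation transparent Moore Machine and $\vec{N_j}$ is a reset transparent Moore Machine. After this substitution, every non-character-wise atom in the cascade is either a truncated permutation transparent Moore Machine or a truncated reset transparent Moore Machine.

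Finally, I would invoke the two propositions expressing truncated permutation transparent Moore Machine maps in the form $\cw_f(AS_n^\trunc(\cw_g(w)))$ and truncated reset transparent Moore Machine maps in the form $\cw_f(Bit^\trunc(\cw_{g_0}(w)),\ldots,Bit^\trunc(\cw_{g_{n-1}}(w)))$. Substituting these expressions into the cascade built in the previous step yields one large multivariable composition whose only non-character-wise atomic pieces are $AS_n^\trunc$ (for various $n$) and $Bit^\trunc$, which is exactly what the statement asks for.

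The main obstacle is bookkeeping rather than mathematics: I must verify that the repeated substitutions preserve the multivariable composition structure, and that the intermediate character-wise maps collapse cleanly. This relies on the facts that character-wise maps are closed under composition, that $\tuplefy$ turns families of equal-length character-wise maps into multi-ary character-wise maps, and that every ingredient in the chain is length-preserving so that its output has the correct alphabet and length to be fed into the next $AS_n^\trunc$ or $Bit^\trunc$ via a suitable $\cw_g$ pre-processing. With that in hand, the substitutions are essentially notational and the decomposition follows.
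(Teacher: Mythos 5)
Your proposal is correct and follows essentially the same route as the paper, which presents this proposition as an immediate corollary of the preceding chain of results: absorb $\epsilon$ into a character-wise map to reduce to the transparent case, apply the Krohn-Rhodes cascade, split each permutation-reset machine into a permutation machine and a reset machine, and then replace those by $AS_n^{\trunc}$ and $Bit^{\trunc}$ via the two propositions. The bookkeeping concerns you raise (closure of character-wise maps under composition and length preservation throughout) are exactly the right things to check and all hold here.
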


This yields a much smaller generating set for the regular functions:

\begin{thm}
Any regular function can be written as a multivariable composition of:
\begin{itemize}
\item $AS_n^{\trunc}$ for various $n$,
\item $Bit^{\trunc}$,
\item Truncated Reverse Moore Machine maps,
\item Character-wise maps,
\item $\tuplefy$ (allowing one to generate multiary character-wise maps),
\item $S_a$ for various $a$,
\item $\unpad$.
\end{itemize}
\end{thm}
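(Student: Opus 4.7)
The plan is to obtain this theorem by substituting the decompositions developed in the previous section into the penultimate theorem of Section 4. That earlier theorem already exhibits every regular function as a multivariable composition built from truncated permutation-reset transparent Moore Machine maps, truncated Reverse Moore Machine maps, character-wise maps, $\tuplefy$, $S_a$'s, and $\unpad$. Since the last four ingredients already appear in the target list, the entire task reduces to eliminating the truncated permutation-reset transparent Moore Machine maps in favor of $AS_n^{\trunc}$, $Bit^{\trunc}$, and character-wise maps.

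The first step is to apply the main lemma of Section 5 to each truncated permutation-reset transparent Moore Machine map $M^{\trunc}$ occurring in the composition, rewriting it as
\[M^{\trunc}(w) = \cw_F(\vec{M}^{\trunc}(\tuplefy(w,\tilde{M}^{\trunc}(w))),\tilde{M}^{\trunc}(w)),\]
where $\tilde{M}$ is a permutation transparent Moore Machine and $\vec{M}$ is a reset transparent Moore Machine. At this stage the composition involves only truncated permutation maps, truncated reset maps, character-wise maps, $\tuplefy$, truncated Reverse Moore Machine maps, $S_a$'s, and $\unpad$.

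The second step is to invoke the two propositions of Section 5 that express truncated permutation transparent Moore Machine maps in terms of $AS_n^{\trunc}$ and character-wise maps, and truncated reset transparent Moore Machine maps in terms of $Bit^{\trunc}$ and character-wise maps. Substituting these expressions into the composition obtained in the first step replaces every $\tilde{M}^{\trunc}$ and $\vec{M}^{\trunc}$ with pieces of the desired shape, yielding a multivariable composition built exclusively from the seven generators listed in the statement.

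I do not expect a genuine obstacle: this is a packaging theorem whose content lies entirely in the earlier lemmas and propositions. The only mildly delicate point is bookkeeping. Since $\vec{M}$ takes inputs of the form $\tuplefy(w,\tilde{M}^{\trunc}(w))$, one must be careful that the replacement of $\vec{M}^{\trunc}$ by a composition involving $\cw_{g_k}$ and $Bit^{\trunc}$ is consistent with this two-input shape, which is handled transparently by letting the $g_k$'s act on the paired alphabet $\Sigma \times S_Q$ and routing inputs through $\tuplefy$ as usual. Once this is checked, collecting terms gives the stated generating set.
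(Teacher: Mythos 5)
Your proposal is correct and follows exactly the route the paper takes: the theorem is obtained by substituting the Section 5 lemma (permutation-reset $=$ permutation composed with reset) and the two propositions (permutation maps via $AS_n^{\trunc}$, reset maps via $Bit^{\trunc}$) into the generating-set theorem at the end of Section 4. The paper treats this as an immediate consequence and offers no separate proof, so your slightly more explicit bookkeeping about the paired alphabet for $\vec{M}$ is, if anything, more careful than the original.
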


\section{Reverse Moore Machines}

Just as in previous sections, I broke down the truncated Moore Machine maps into compositions involving the accumulator on $S_n$, the $Bit$ automaton, and character-wise maps (note that uses of $\tuplefy$ are length-preserving, and thus actually character-wise applications of a tuple-construction map),
in this section, I break down truncated reverse Moore Machine maps similarly. To save work, I will simply introduce a reversal map $\rev$ (which is not regular) to connect truncated reverse Moore Machine maps and truncated Moore Machine Maps.

\begin{defi}
Given a word $w \in \Sigma^*$ define $\rev(w)$ to be the reversal of $w$.

Given a Moore Machine $M = (\Sigma, Q, q_0, \Gamma, \delta, \epsilon)$, define its reversal:
\[\rev(M) = (\Sigma, Q, q_0, \Gamma, \delta, \epsilon).\]
And similarly define the reversal of a reverse Moore Machine.
\end{defi}

\begin{prop}
Given a Moore Machine $M$, and word $w$:
\[\rev(M)(w) = \rev(M(\rev(w))).\]
What's more:
\[\rev(M)^{\trunc}(w) = \rev(M^{\rest}(\rev(w))).\]
\end{prop}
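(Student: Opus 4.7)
The plan is to unfold the definitions on both sides and perform a short index-chasing induction on the runs. Let $n = |w|$, let $r$ be the run of $\rev(M)$ on input $w$, and let $r'$ be the run of $M$ on input $\rev(w)$. By the defining recurrences of a Reverse Moore Machine and a Moore Machine respectively, $r[n] = q_0$ with $r[i] = \delta(r[i+1], w[i])$ for $0 \le i < n$, while $r'[0] = q_0$ with $r'[i+1] = \delta(r'[i], \rev(w)[i]) = \delta(r'[i], w[n-1-i])$.

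The key claim I would prove is that $r[i] = r'[n-i]$ for every $0 \le i \le n$. This falls out of a downward induction on $i$: the base case $i = n$ is just $q_0 = q_0$, and the inductive step uses the identity $\rev(w)[n-1-i] = w[i]$ to rewrite one transition rule as an instance of the other. Applying $\epsilon$ pointwise then yields $\rev(M)(w)[i] = \epsilon(r[i]) = \epsilon(r'[n-i]) = M(\rev(w))[n-i]$, which is exactly the first equation $\rev(M)(w) = \rev(M(\rev(w)))$.

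For the truncated version, I would just bookkeep which end each operator discards. The operator $\rev(M)^{\trunc}$ drops position $n$ of the output (the fixed final state $q_f = q_0$ of the Reverse Moore Machine) and retains indices $0, \ldots, n-1$, whereas $M^{\rest}$ drops position $0$ (the fixed initial state $q_0$) and retains indices $1, \ldots, n$. Under the involution $i \mapsto n-i$ these two index sets correspond exactly, so restricting the first identity to these positions and reversing gives the second identity.

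The main obstacle, mild as it is, is keeping straight the off-by-one asymmetry between the two machine types: a Reverse Moore Machine's distinguished state lives at position $|w|$ while a Moore Machine's distinguished state lives at position $0$, which is precisely why $\trunc$ on the reversed machine matches $\rest$ on the original machine rather than another $\trunc$. Once this pairing is fixed, the rest is pure index chasing.
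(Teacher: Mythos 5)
Your proof is correct. The paper states this proposition without giving a proof, and your argument --- identifying the run of $\rev(M)$ on $w$ with the reversed run of $M$ on $\rev(w)$ via the correspondence $r[i] = r'[n-i]$, then checking that $\trunc$ at position $n$ matches $\rest$ at position $0$ under $i \mapsto n-i$ --- is exactly the intended index-chasing verification.
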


Functions that are related in this way are said to be related by \emph{conjugation by $\rev$}. This relation is reflexive and symmetric. What's more since $\rev$ is its own inverse, if $f$ and $f'$ are related by conjugation and $g$ and $g'$ are related by conjugation, then $f \circ g$ and $f' \circ g'$ will be related by conjugation. Indeed this works for multiary functions as well:

\begin{defi}
Given an $n$-ary function $f$, say that \[(x_0,\ldots,x_{n-1}) \mapsto f(x_0,\ldots,x_{n-1})\] And
\[(x_0,\ldots,x_{n-1}) \mapsto \rev(f(\rev(x_0),\ldots,\rev(x_{n-1})))\]
Are related by \emph{conjugation by $\rev$}.
\end{defi}

\begin{prop}
Given a multi-ary composition of functions, if one replaces each function by its conjugation by $\rev$, the overall composition is related to the original composition by conjugation by $\rev$.
\end{prop}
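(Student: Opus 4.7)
The plan is to reduce this to the single observation that $\rev$ is an involution ($\rev \circ \rev = \mathrm{id}$) and then proceed by structural induction on the multi-ary composition.

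First I would set up notation. Write $f^{*}$ for the conjugate of $f$ by $\rev$, so that for an $n$-ary $f$,
\[
f^{*}(x_0, \ldots, x_{n-1}) = \rev\bigl(f(\rev(x_0), \ldots, \rev(x_{n-1}))\bigr).
\]
A multi-ary composition is built by a finite syntax tree whose internal nodes are functions and whose leaves are input variables; I would formalize this and induct on the depth of the tree. The base case, a single function $f$ applied to variables $(w_0, \ldots, w_{n-1})$, holds by definition: replacing $f$ by $f^{*}$ gives exactly $\rev(f(\rev(w_0), \ldots, \rev(w_{n-1})))$, which is the conjugate of the original composition.

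For the inductive step, suppose the composition has the form
\[
F(w_0, \ldots) \;=\; g\bigl(H_0(w_0,\ldots), \ldots, H_{k-1}(w_0,\ldots)\bigr),
\]
where each $H_i$ is a subcomposition of smaller depth. Replacing every function with its conjugate gives $g^{*}(H_0^{\dagger}, \ldots, H_{k-1}^{\dagger})$, where $H_i^{\dagger}$ denotes the replacement inside $H_i$. By the inductive hypothesis, $H_i^{\dagger}(w_0,\ldots) = \rev(H_i(\rev(w_0), \ldots))$. Plugging in and unfolding $g^{*}$, the key cancellation is that each inner application of $\rev$ coming from $g^{*}$ meets the outer $\rev$ supplied by the $H_i^{\dagger}$, and these two reversals cancel because $\rev$ is its own inverse. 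What remains is
\[
\rev\bigl(g(H_0(\rev(w_0), \ldots), \ldots, H_{k-1}(\rev(w_0), \ldots))\bigr) \;=\; \rev\bigl(F(\rev(w_0), \ldots)\bigr),
\]
which is exactly the conjugate of $F$ by $\rev$.

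The hard part is really only bookkeeping: making sure the formalism for multi-ary composition is precise enough that the cancellation of adjacent $\rev$'s is legitimate at every internal edge of the composition tree, including when the same input variable is shared between several subcompositions (so that $\rev$ is applied to each leaf exactly once). Once the tree is made explicit, every intermediate wire carries a pair $\rev \circ \rev$ that collapses to the identity, and there is no algebraic content beyond the involution property. No property of regular functions is used; the statement is purely about how conjugation by an involution interacts with composition.
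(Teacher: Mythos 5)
Your proof is correct and is essentially the paper's own argument, made more formal: the paper's one-line proof is exactly your key observation that at every internal edge of the composition the output is reversed twice and the two reversals cancel. The structural induction you sketch is just the bookkeeping needed to make that cancellation precise, so no genuinely different route is taken.
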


\begin{proof}
It suffices to note that in the resulting composition, whenever the output of a function is fed into the input of another function, it is reversed twice, effectively doing nothing to it.
\end{proof}

Additionally, conjugation by $\rev$ does not alter character-wise functions.

It is necessary to prove that $M^{\rest}$ can be written in terms of $M^{\trunc}$:

\begin{prop}
Given a transparent Moore Machine $M = (\Sigma, Q, q_0, \delta)$, one can write $M^{\rest}$ as a composition of character-wise maps and $M^{\trunc}$.
\end{prop}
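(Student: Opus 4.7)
The plan is to read off $M^{\rest}(w)$ from $M^{\trunc}(w)$ and $w$ position by position using the transition function $\delta$ as a character-wise operation. If $r$ denotes the run of $M$ on input $w$, then $r$ has length $|w|+1$, $M^{\trunc}(w)$ is the prefix $r[0]\cdots r[|w|-1]$, and $M^{\rest}(w)$ is the suffix $r[1]\cdots r[|w|]$. Both have length $|w|$, and they are offset by one index into the same underlying sequence, which is the observation that makes the problem trivialize.

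First I would invoke the Moore Machine recurrence $r[i+1]=\delta(r[i],w[i])$ valid for $0\le i<|w|$. Substituting $M^{\trunc}(w)[i]=r[i]$ and $M^{\rest}(w)[i]=r[i+1]$, this reads
\[M^{\rest}(w)[i] \;=\; \delta\bigl(M^{\trunc}(w)[i],\,w[i]\bigr)\]
for every $0 \le i < |w|$. Since $M^{\trunc}(w)$ and $w$ are words of the same length, I can package this identity as a single binary character-wise map. Explicitly, viewing $\delta$ as a function $Q\times\Sigma\to Q$,
\[M^{\rest}(w) \;=\; \cw_{\delta}\bigl(M^{\trunc}(w),\,w\bigr),\]
where the use of $\cw$ on two equal-length arguments is well-defined as noted after the definition of $\cw_f$ (no $\#$ padding is introduced by $\tuplefy$ on equal-length inputs).

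The only thing to check is that this composition is legal in the framework of the paper: $M^{\trunc}$ is a truncated Moore Machine map, $\delta$ is a fixed finite function and thus $\cw_{\delta}$ is a multi-ary character-wise map, and the input $w$ appears as a free variable in exactly one slot. There is no real obstacle here; the proof is essentially unwinding the definitions, and the only conceptual point worth highlighting is the length bookkeeping that ensures $M^{\trunc}(w)$ and $w$ can be zipped character-wise without any need for the length-padding features of $\tuplefy$.
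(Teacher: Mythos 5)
Your proof is correct and takes exactly the same approach as the paper, which states the identity $M^{\rest}(w) = \cw_\delta(M^{\trunc}(w),w)$ and leaves the verification to the reader. Your unwinding of the recurrence $r[i+1]=\delta(r[i],w[i])$ is precisely that verification, so nothing further is needed.
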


\begin{proof}
It is easy to verify that:
\[M^{\rest}(w) = \cw_\delta(M^{\trunc}(w),w).\]
\end{proof}

It follows from this that for any Moore Machine $M$, $M^{\rest}$ can be written as a composition of character-wise maps and $M^{\trunc}$.

\begin{prop}
Any truncated reverse Moore Machine map can be written as the multivariate composition of:
\begin{itemize}
\item $RAS_{n}^{\trunc}$, where $RAS_n$ is the reversal of $AS_n$, for various $n$,
\item $RBit^{\trunc}$, where $RBit$ is the reversal of $Bit$,
\item Character-wise maps.
\end{itemize}
\end{prop}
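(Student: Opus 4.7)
The plan is to mirror the Moore Machine decomposition directly for reverse Moore machines, rather than attempting to conjugate the previous theorem piece-by-piece. A purely conjugation-based proof runs into trouble at $Bit$: the conjugate of $Bit^\trunc$ by $\rev$ is $RBit^\rest$, which cannot obviously be recovered from $RBit^\trunc$ via character-wise maps because $Bit$'s transitions are not invertible. I therefore plan to re-establish each step of the previous section's chain (Krohn--Rhodes, permutation-reset splitting, atomic realization) directly for reverse Moore machines at the $^\trunc$ level.

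First I would establish a reverse Krohn--Rhodes: given reverse Moore $R = (\Sigma, Q, q_f, \delta)$, build analogously to the Moore case a permutation-reset reverse Moore $\overline{R}$ with $\overline{r}[i] \ne r[i]$ always, and a reverse Moore $\widehat{R}$ on $|Q|-1$ states tracking the index of $r[i]$ in $Q \setminus \{\overline{r}[i]\}$, so that
\[R^{\trunc}(w) = \cw_p(\overline{R}^{\trunc}(w), \widehat{R}^{\trunc}(\tuplefy(w, \overline{R}^{\trunc}(w)))).\]
The delicate point, which I expect to be the main obstacle, is that the naive analog of Moore's $\widehat{M}$ seems to need $\overline{R}^{\rest}$ (i.e.\ $\overline{r}[i+1]$) as auxiliary input, and this is not freely derivable from $\overline{R}^{\trunc}$ when $\overline{R}$ has resets. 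However, a case split inside $\widehat{R}$'s transition resolves it. If $\delta_{w[i]}$ is a reset in $R$, then $r[i] = q_{w[i]}$ and $\widehat{R}$'s new state depends only on $w[i]$ and $\overline{r}[i]$. If $\delta_{w[i]}$ is a permutation, then $\overline{\delta}_{w[i]}$ equals the same permutation and is invertible, so $\overline{r}[i+1] = \overline{\delta}_{w[i]}^{-1}(\overline{r}[i])$ can be recovered inside the transition from $(w[i], \overline{r}[i])$ alone. Induction on $|Q|$ then expresses $R^{\trunc}$ as a multivariate composition of permutation-reset reverse Moore $^{\trunc}$'s and character-wise maps.

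Next, for a permutation-reset reverse Moore $R$, I would mirror the Moore splitting lemma: let $\tilde{R}$ be the permutation reverse Moore with state set $S_Q$ accumulating permutation actions (and treating resets as identity), and $\vec{R}$ the reset reverse Moore whose transition on input $(a, f)$ sends $\vec{q} \mapsto f^{-1}(q_a)$ when $\delta_a$ is a reset to $q_a$ and is the identity when $\delta_a$ is a permutation. Then
\[R^{\trunc}(w) = \cw_F(\vec{R}^{\trunc}(\tuplefy(w, \tilde{R}^{\rest}(w))), \tilde{R}^{\trunc}(w))\]
with $F(\tilde{q}, \vec{q}) = \tilde{q}(\vec{q})$. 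Because $\tilde{R}$ has only permutation transitions, $\tilde{R}^{\rest} = \cw_{\tilde{\delta}^{-1}}(\tilde{R}^{\trunc}(w), w)$ is character-wise-recovered from $\tilde{R}^{\trunc}$. Finally, any permutation reverse Moore $\tilde{R}$ is realized as $\cw_f(RAS_n^{\trunc}(\cw_g(w)))$ by embedding $Q$ into $S_n$, and any reset reverse Moore $\vec{R}$ is realized bit-by-bit: encode states as binary strings with $q_f$ mapped to all zeros, and for each bit position $k$ define $g_k:\Sigma \to \{-,0,1\}$ sending identity-inputs to $-$ and each reset-input $a$ to the $k$th bit of its target, so that each bit of $\vec{R}^{\trunc}(w)$ equals $RBit^{\trunc}(\cw_{g_k}(w))$, combined through a final character-wise decoding map. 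Chaining all these reductions expresses $R^{\trunc}$ as a multivariate composition of $RAS_n^{\trunc}$, $RBit^{\trunc}$, and character-wise maps.
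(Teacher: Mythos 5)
Your diagnosis of the conjugation route is astute, and it is worth saying plainly that the conjugation route \emph{is} the paper's proof: the paper writes $\rev(M)^{\trunc} = \rev \circ M^{\rest} \circ \rev$, conjugates the decomposition of $M^{\rest}$ term by term, and then silently identifies the conjugate of $Bit^{\trunc}$, which is $RBit^{\rest}$, with $RBit^{\trunc}$. You are right that this identification is not free, precisely because $Bit$'s reset transitions are not invertible; so you have located a genuine soft spot in the published argument rather than an aesthetic one. Your plan to mirror the whole chain directly is therefore a genuinely different route. Unfortunately it has a gap of essentially the same nature, relocated. In your reverse Krohn--Rhodes step the machine $R$ is an \emph{arbitrary} transparent reverse Moore machine, so $\delta_{w[i]}$ need not be a reset or a permutation; it can be any non-bijective, non-constant map. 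Your case split covers only the two extremes. In the remaining case the Permutation-Reset Lemma makes $\overline{\delta}_{w[i]}$ a reset to a state outside the image of $\delta_{w[i]}$, so $\overline{r}[i+1]$ cannot be recovered from $(\overline{r}[i], w[i])$; yet $r[i] = \delta_{w[i]}(r[i+1])$ genuinely depends on $r[i+1]$, which $\widehat{R}$ can only decode from its state at position $i+1$ if it knows $\overline{r}[i+1]$. So $\widehat{R}$'s transition is not well defined from the data you feed it. Supplying $\overline{R}^{\rest}(w)$ instead of $\overline{R}^{\trunc}(w)$ repairs the transition but reintroduces exactly the $\rest$-versus-$\trunc$ problem you set out to avoid.

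The clean resolution is to accept that the reverse world mirrors at the $\rest$ level: run your entire chain with $\rest$ everywhere (where every step dualizes without friction), conclude that $R^{\rest}$, and hence $R^{\trunc} = \cw_{\delta}(R^{\rest}(w), w)$, is a composition of $RAS_n^{\rest}$, $RBit^{\rest}$ and character-wise maps, and then prove the one missing lemma that $RBit^{\rest}$ lies in the closure of $RAS_n^{\trunc}$, $RBit^{\trunc}$ and character-wise maps. That lemma is true but needs an actual idea, since it amounts to shifting non-invertible information by one position: for example, apply $RAS_2^{\trunc}$ to a constant word to make the position parity available character-wise, build two complementary maskings of $w$ that replace alternate positions by the identity character of $RBit$, apply $RBit^{\trunc}$ to each, and select between the two outputs by parity; at a masked position the first non-identity character to the right sits one step ahead, which realizes the shift. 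Without some such lemma neither your argument nor the paper's one-line conjugation proof is complete; with it, either can be finished.
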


\begin{proof}
Every reverse Moore Machine is $\rev(M)$ for some $M$. As such, one can write $\rev(M)^{\trunc}$ as:
\[\rev \circ M^{\rest} \circ \rev.\]
One can write $M^{\rest}$ as a multiary composition of $AS_n^{\trunc}$ for various $n$, $Bit^{\trunc}$, and character-wise maps, so by the conjugation of compositions lemma, one can write $\rev(M)^{\trunc}$ as a multiary composition of the conjugations of those components.
\end{proof}

Also note:
\begin{prop}
Given a Moore Machine $M$, one can write $M(w)$ as the multivariable composition of a truncated Moore Machine map and $S_{\#}$.
\end{prop}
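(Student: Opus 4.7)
The plan is to exhibit a single Moore Machine $M'$ over the extended alphabet $\Sigma \cup \{\#\}$ such that $M(w) = {M'}^{\trunc}(S_{\#}(w))$ for every $w \in \Sigma^*$. The motivation is a length count: $M(w)$ has length $|w|+1$, whereas any length-preserving map applied directly to $w$ would only yield a word of length $|w|$. Applying $S_\#$ first pads the input up to length $|w|+1$, so that following up with a truncated Moore Machine map (which is length-preserving) produces an output of the correct length.

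First, I would define $M' = (\Sigma \cup \{\#\}, Q, q_0, \Gamma, \delta', \epsilon)$, where $\delta'$ extends $\delta$ by setting $\delta'(q,a) = \delta(q,a)$ for $a \in \Sigma$ and $\delta'(q,\#) = q$ (the choice here is immaterial). Second, I would compare runs: if $r$ is the run of $M$ on $w$, and $r'$ is the run of $M'$ on $S_\#(w)$, then an easy induction shows $r'[i] = r[i]$ for $0 \le i \le |w|$, while $r'[|w|+1]$ is an extra state appended at the end. Third, by the definition of the output sequence, the output of $M'$ on $S_\#(w)$ is $\epsilon(r'[0]), \ldots, \epsilon(r'[|w|+1])$, and truncation removes the final entry, leaving exactly $\epsilon(r[0]), \ldots, \epsilon(r[|w|]) = M(w)$.

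There is essentially no obstacle here beyond careful bookkeeping: the only nontrivial point is to recognize that the role of $S_\#$ is purely to supply the extra slot of length needed to accommodate $M$'s post-truncation output, and that the dummy symbol $\#$ can be absorbed harmlessly into a trivial self-loop on every state of $M'$.
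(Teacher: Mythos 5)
Your proof is correct and is essentially the paper's own argument: the paper likewise augments $M$ to an $M'$ that interprets $\#$ arbitrarily and observes $M(w) = {M'}^{\trunc}(S_{\#}(w))$. You simply spell out the run comparison and the length bookkeeping that the paper leaves implicit.
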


\begin{proof}
Augment $M$ to $M'$ by allowing it to interpret the input $\#$ (it may do so in any way it likes). Then:
\[M(w) = M^{\trunc}(S_{\#}(w)).\]
\end{proof}

As one final refinement of the generating set, I show that $RAS_n$ is unnecessary as a generator.

\section{Removing the Reverse Accumulator}

Note that $AS_n$ and $RAS_n$ are very similar automata. For $AS_n$, one interprets the input character $w[i]$ as a permutation relating $AS_n(w)[i]$ and $AS_n(w)[i+1]$. For $RAS_n$, one interprets the input character $w[i]$ as a permutation relating $RAS_n(w)[i+1]$ and $RAS_n(w)[i]$. Since every permutation has an inverse, shouldn't these two automata be the same up to a suitable character-wise map on the inputs? Alas, the distinction is more subtle: there is another constraint on the runs of $AS_n$ and $RAS_n$. For $AS_n$, the first character of its run is specified to be $id$. For $RAS_n$, the last character of its run is specified to be $id$.

Compare $AS_n(w)$ and $RAS_n(\cw_{inverse}(w))$ on some generic five character input $w = abcde$:

\[\begin{array}{c|cccccc}
w & a & b & c & d & e & \\
\hline
AS_n(w) & id & a & ab & abc & abcd & abcde \\
RAS_n(\cw_{inverse}(w)) & (abcde)^{-1} & (bcde)^{-1} & (cde)^{-1} & (de)^{-1} & e^{-1} &  id\\
\end{array}\]

In addition to applying a suitable transformation to the inputs of $AS_n$, one must also apply a suitable transformation to the outputs of $AS_n$ if one wants to produce the output of $RAS_n$. Specifically, if one multiplies every character in the output of $AS_n$ on the left by the inverse of the last character of the output, it will ensure that the new last character of the output is $id$, but still maintain the transition relationships. This requires passing the information of the last character of $AS_n$ to every other position.

First, I must introduce the \emph{mask} of the word $w$, a word $\mask(w)$ which is all 0s up to the length of $w$, followed by a 1. This is computed simply by taking $S_1(\cw_0(w))$ where 0 is the constant 0 map. Despite its simplicity, this word will be key to performing the computation.

Consider the reverse reset Transparent Moore Machine $R = (\{0,1\} \times S_n,S_n,id,\delta)$ with:
\[\delta(q,a) = \begin{cases} b & a = (1,b) \\ q & a = (0,b) \end{cases}\]
This is a reverse reset transparent Moore Machine, and so $R^{\trunc}$ can be written in terms of character-wise maps and $RBit^{\trunc}$.

Consider the action of $R^{\trunc}$ on $\cw_{pair}(\mask(w),AS_n(w))$:

\[\begin{array}{c|cccccc}
w & a & b & c & d & e & \\
\hline
\mask(w) & 0 & 0 & 0 & 0 & 0 & 1\\
AS_n(w) & id & a & ab & abc & abcd & abcde \\
R^{\trunc}(\cw_{pair}(``)) & abcde & abcde & abcde & abcde & abcde & abcde
\end{array}\]

I now have a word which can be combined with $AS_n(w)$ via the appropriate bitwise map (multiplying by the inverse on the left) to produce $RAS_n(\cw_{inverse}(w))$.

However, what I wanted was $RAS_n^{\trunc}(\cw_{inverse}(w))$. One can attain that by combining $RAS_n(\cw_{inverse}(w))$ with $\mask(w)$ bitwise to replace the last character of $RAS_n(\cw_{inverse}(w))$ with a $\#$ and then using $\unpad$ to remove it.

Letting $f$ map pairs of the form $(0,a)$ to $a$ and pairs of the form $(1,a)$ to $\#$:

\[\begin{array}{c|cccccc}
w & a & b & c & d & e & \\
\hline
\mask(w) & 0 & 0 & 0 & 0 & 0 & 1\\
RAS_n(\cw_{inverse}(w)) & (abcde)^{-1} & (bcde)^{-1} & (cde)^{-1} & (de)^{-1} & e^{-1} &  id\\
\cw_f(``) &  (abcde)^{-1} & (bcde)^{-1} & (cde)^{-1} & (de)^{-1} & e^{-1} &  \#\\
\unpad(``) & (abcde)^{-1} & (bcde)^{-1} & (cde)^{-1} & (de)^{-1} & e^{-1} & \\
\end{array}\]

This composition produces the desired output and works in general. Thus, the final form of my theorem:

\begin{thm}
Any regular function can be written as a multivariable composition of:
\begin{itemize}
\item $AS_n^{\trunc}$ for various $n$,
\item $Bit^{\trunc}$,
\item $RBit^{\trunc}$,
\item Character-wise maps,
\item $\tuplefy$,
\item $S_a$ for various $a$,
\item $\unpad$.
\end{itemize}
\end{thm}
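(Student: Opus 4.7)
The plan is to invoke the previous theorem and then eliminate every occurrence of $RAS_n^{\trunc}$ from its generating set, retaining only the seven items listed here. The previous theorem already writes each regular function as a multivariable composition of $AS_n^{\trunc}$, $Bit^{\trunc}$, truncated reverse Moore Machine maps, character-wise maps, $\tuplefy$, $S_a$, and $\unpad$, and the earlier proposition expresses every truncated reverse Moore Machine map via $RAS_n^{\trunc}$, $RBit^{\trunc}$, and character-wise maps. So the entire proof reduces to showing that each $RAS_n^{\trunc}$ is itself a multivariable composition of the seven new generators.

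For this I would assemble the construction sketched in this section. I first build $\mask(w) = S_1(\cw_0(w))$ from $S_1$ and a character-wise constant map, and I obtain the un-truncated accumulator $AS_n(w)$ of length $|w|+1$ by invoking the earlier proposition writing $M$ as ${M'}^{\trunc} \circ S_{\#}$ for a suitably augmented Moore Machine $M'$. I then introduce the reverse-reset transparent Moore Machine $R$ whose transition is the identity on $(0,b)$ and a reset to $b$ on $(1,b)$; because $R$ is reverse-reset, the preceding section writes $R^{\trunc}$ using only $RBit^{\trunc}$ and character-wise maps. Feeding $\cw_{\pair}(\mask(w), AS_n(w))$ into $R^{\trunc}$ then causes the mask's single $1$ at the final position to trigger a backwards reset of $R$ to the total product $P = w[0] \cdots w[|w|-1]$, which propagates unchanged to every earlier position.

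A character-wise map $(P,g) \mapsto P^{-1}g$ applied to this broadcast word together with $AS_n(w)$ yields $RAS_n(\cw_{\mathrm{inv}}(w))$, whose last entry is the identity; a second character-wise map guided by $\mask(w)$ replaces that trailing identity with $\#$, and $\unpad$ removes it, producing $RAS_n^{\trunc}(\cw_{\mathrm{inv}}(w))$. Pre-composing with an outer $\cw_{\mathrm{inv}}$ on the input undoes the initial inversion and delivers $RAS_n^{\trunc}$ itself, so substituting this expression back into the previous theorem's composition gives the claim. The main subtlety is length bookkeeping across the $\cw_{\pair}$ combinations and the reverse Moore Machine's truncation convention; once $\mask(w)$ and $AS_n(w)$ are aligned at the common length $|w|+1$ and only the final $\unpad$ drops the length back to $|w|$, as in the tables already drawn in the preceding discussion, each stage is a routine verification of the transitions of $AS_n$, $R$, and $RAS_n$.
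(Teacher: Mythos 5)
Your proposal is correct and follows essentially the same route as the paper: reduce to eliminating $RAS_n^{\trunc}$ from the previous theorem's generating set, then realize $RAS_n^{\trunc}(\cw_{\mathrm{inverse}}(w))$ by broadcasting the total product backwards with the reverse-reset machine $R$ applied to $\cw_{\pair}(\mask(w), AS_n(w))$, left-multiplying character-wise by its inverse, and using the mask together with $\unpad$ to strip the trailing identity. The only points you make explicit that the paper leaves implicit --- obtaining the untruncated $AS_n(w)$ via $S_{\#}$ and a character-wise relabeling of $\#$ to $id$, and pre-composing with $\cw_{\mathrm{inverse}}$ to recover $RAS_n^{\trunc}$ itself --- are exactly the right bookkeeping.
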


\section{Further Research}

While of independent interest, these decomposition results have significantly streamlined proofs involving nonstandard models of the Weak Second order Theory of One Successor (WS1S). I intend to publish my findings in two papers, one presenting a complete axiomatization of WS1S, and the other presenting some results towards a classification of the nonstandard models of WS1S.

While thinking about automata as transducers instead of acceptors takes one away from the underlying logic, it takes one closer to real-world applications of automata. One is then lead to ask similar questions about functions whose graphs are recognized by B{\"{u}}chi automata, Tree automata, and Rabin automata. The determinization-harvester decomposition can be adapted to trees, but is there an analog to the Krohn-Rhodes theorem for trees in this context? What about in the case of B{\"{u}}chi automata or Rabin automata, for which there is no end of the input to start the harvester running backwards from? Can nice generators still be found?

There is still much work to be done in establishing a B{\"{u}}chi-Elgot-Trakhtenbrot theorem for graphs. There are several nice candidates for a monadic second-order logic of graphs, and some nice notions of automata operating on graphs, but no full correspondence between them. The current state of the art is Courcelle's Theorem (see, e.g. \cite{downey}), which allows one to translate questions in a graph logic to tree automata operating on a tree decomposition of the original graph, but not back. Fortunately, this is the direction of most interest to applications. But perhaps an approach which instead of trying to connect formulas and acceptors, connected describable functions and transducers, would shed light on the problem?

Finally, also note that it is not possible to break down the $AS_n$ generators much further. Of course for large enough $n$ one may decompose the symmetric group $S_n$ as a semidirect product of the alternating group $A_n$ and $S_2$, and use this decomposition to guide a slight decomposition of $AS_n$, but this doesn't gain anything. In light of the simplicity of the (large) alternating groups, it is likely no further decomposition in the function composition context is possible. One would then like a proof that, for instance, accumulators on the cyclic groups do not suffice, in a way that hopefully sheds some light on what behavior symmetric groups capture that cyclic groups cannot. Alternately, a decomposition of the accumulators on the symmetric groups in terms of accumulators on the cyclic groups would be a remarkable result.

\section*{Acknowledgment}
  \noindent The author wishes to acknowledge fruitful discussions with Scott Messick. This work would not have been possible without encouragement and guidance from my Ph.D. advisor, Anil Nerode.

\appendix
\section{List of Notation}
\begin{tabular}{ll}
$M,N$ & Moore machines\\
$R,P$ & Reverse Moore machines\\
$A$ & Finite automata\\
$\Sigma, \Gamma$ & Finite alphabets\\
$\Sigma^*$ & The set of finite words with characters from $\Sigma$\\
$Q$ & Finite set of states\\
$q_0$ & Initial state\\
$q_f$ & Final state\\
$I$ & Set of initial states\\
$F$ & Set of final states\\
$\delta$ & Transition relation $Q \times \Sigma \to Q$\\
$\delta_a(q)$ & $= \delta(q,a)$\\
$|w|$ & The length of the word $w$\\
$w[i]$ & The character of word $w$ in position $i$ (first character is $w[0]$, last is $w[|w|-1]$)\\
$w_0^{\frown}w_1$ & Concatenation\\
$S_a(w)$ & $=w^{\frown}a$\\
$\trunc$ & Remove the last character of a word\\
$\rest$ & Remove the first character of a word\\
$M(w)$ & The output of Moore machine $M$ on input $w$\\
$M^{\trunc}(w)$ & $=\trunc(M(w))$\\
$M^{\rest}(w)$ & $=\rest(M(w))$\\
$\#$ & Dummy character for padding ends of words\\
$\tuplefy$ & Combine inputs words in parallel, padding with $\#$\\
$\cw_f$ & Character-wise application of function $f$\\
$\det(A)$ & Determinization of automaton $A$\\
$\harv(A)$ & Harverster construction for automaton $A$\\
$\pair(x,y)$ &$= (x,y)$\\
$\unpad$ & Removes terminal $\#$ characters\\
$\overline{M}$ & Moore machine which keeps track of a state that Moore machine $M$ is not in\\
$\widehat{M}$ & Moore machine which keeps track of the rest of the information about the state of $M$\\
$\ors_{k,n}$ & The set of ordinal removal sequences for $n$ of length $k$\\
$\mathbf{k}$ & $=(k_0,\ldots,k_{i-1})$\\
$\remove(L,\mathbf{k})$ & Applies ordinal removal sequence $\mathbf{k}$ to ordered collection $L$\\
$\vec{M}$ & A reset Moore Machine\\
$\tilde{M}$ & A permutation Moore Machine\\
$AS_n$ & The accumulator on $S_n$ automaton\\
$Bit$ & The bit storage automaton\\
$RAS_n$ & The reverse accumulator on $S_n$\\
$RBit$ & The reverse bit storage automaton\\
$\rev(w)$ & The reverse of word $w$
\end{tabular}

\end{document}